\documentclass[journal,draft,onecolumn,12pt]{IEEEtran}
\usepackage{amsfonts,color,morefloats}
\usepackage{amssymb,amsmath,latexsym,amsthm}
\usepackage{float}
\usepackage{algorithm}
\usepackage{algorithmic}
\usepackage{multirow,tabu}
\usepackage{nicematrix} 
\usepackage{amsmath}    
\usepackage{multirow, bigdelim} 
\usepackage{enumitem}   

\newcommand{\C}{{\mathcal{C}}}
\newcommand{\D}{{\mathcal{D}}}

\newtheorem{theorem}{Theorem}
\newtheorem{lemma}[theorem]{Lemma}

\newtheorem{corollary}[theorem]{Corollary}

\newtheorem{example}[theorem]{Example}
\newtheorem{remark}[theorem]{Remark}
\definecolor{myred}{RGB}{255,0,0}

\begin{document}

\title{{ Explicit construction of} optimal binary linear complementary pairs via Belov codes \thanks{
  This work was supported by the National Natural Science Foundation of China (No. 62372247) and by open research fund of State Key Laboratory of Cyberspace Security Defense (No. 2025-MS-03).
}
}

\author{Yansheng Wu,\thanks{Y. Wu is with the School of Computer Science, Nanjing University of Posts and Telecommunications, Nanjing, 210023, China, and the State Key Laboratory of Cyberspace Security Defense (Institute of Information Engineering, Chinese Academy of Sciences, Beijing 100085), China. 
 Email: yanshengwu@njupt.edu.cn.} 
\and Jing Chen,\thanks{J. Chen is with the School of Computer Science, Nanjing University of Posts and Telecommunications, Nanjing
210023, China. Email: 1806108644@qq.com.} 
\and  Jong Yoon Hyun\thanks{
J. Y. Hyun is with the Konkuk University, Glocal Campus, 268 Chungwon-daero Chungju-si Chungcheongbuk-do 27478, South Korea, Email: hyun33@kku.ac.kr.}
}

\date{\today}
\maketitle

\begin{abstract} 

The study of linear complementary pairs (LCPs) has been a hot topic in coding theory. Recently, Güneri constructed an infinite family of optimal binary LCPs from Solomon-Stiffler codes. Specifically, he proved that for every $k \geq 5$ and $d \geq \lceil (k - 1)/2 \rceil 2^{k - 1}$, there exist optimal binary linear complementary pairs of codes of length $g_2(k,d)$, where $g_2(k,d)$ denotes the Griesmer bound.
 In this paper, we construct additional optimal binary LCPs from Belov codes, which form a generalized class of Solomon-Stiffler codes. Our results not only generalize Güneri's construction, but also enable  more flexible construction of  optimal binary LCPs with  a wide range of minimum distances.


\end{abstract}


\begin{IEEEkeywords}
Linear complementary pairs, Belov codes, Solomon-Stiffler codes, Griesmer bound
\end{IEEEkeywords}

\section{Introduction}\label{sec-intro} 

A pair of linear codes $(\C,\D)$ of length $n$ over $\mathbb{F}_{q}$ is called a linear complementary pair (LCP) if their direct sum $\C \oplus \D$ yields the full space $\mathbb{F}_{q}^{n}$. Since their dimensions must satisfy $\dim \C + \dim \D = n$ and $\C \cap \D = \{\mathbf{0}\}$, if $\dim \C = k$ and $\dim \D = n - k$, then we call $(\C,\D)$ an $[n,k]$ LCP. It is easy to show that the concept of LCPs is a generalization of linear complementary dual codes (LCD codes), see \cite{c18}. The most important problem in this research  is the construction of LCPs with the maximum security parameter. For given $n$ and $k$, let
\[
d_{LCP}(n,k) = \max \left\{ \min \left\{ d(\C), d(\D^{\bot}) \right\} : (\C,\D) \text{ is an } [n,k] \text{ binary LCP} \right\}
\]
denote the maximum security parameter for an $[n,k]$ binary LCP.  If $d_L(n,k)$ denotes the highest minimum distance among binary linear $[n,k]$ codes, we clearly have
$
 d_{LCP}(n,k) \leq d_L(n,k).
$
We call an $[n,k]$ linear code optimal if its minimum distance is $d_L(n,k)$. We also call an $[n,k]$ LCP $(\C,\D)$ optimal if the {maximum} security parameter  reaches $d_L(n,k)$.  

 LCPs have been used in the framework of direct sum masking, which was proposed as a countermeasure against side channel attacks (SCAs) and fault injection attacks (FIAs), see \cite{G23,Carlet18, G19, H21, Lixia, L22, L231, L232}. Here we list some recent progress on LCPs.

\begin{itemize}
\item  Carlet {\it et al.} \cite{Carlet19} proved that for any $[n,k,d]$ linear code $\C$ over $\mathbb{F}_{q}$, with $q > 2$, there is a monomial transformation $\sigma$ on $\mathbb{F}_{q}^{n}$ such that $(\C,\sigma(\C)^{\bot})$ is an LCP. Consequently, the security parameter in this configuration is $d$, since $d = d(\C) =d(\sigma(\C)^{{\perp}})$. { This fully settles the problem of determining the optimal security parameter for $q$-ary LCPs when \(q \geq 2\).}  {In the binary case, however, the authors demonstrated that given an $[n,k,d]$ binary code, one can construct an $[n,k]$ LCP of binary codes $(\C,\D)$ satisfying $d(\C) = d(\D^{\bot}) \geq d - 1$. Based on this construction, they established the following bounds: $d_L(n,k) - 1\leq d_{LCP}(n,k) \leq d_L(n,k)$. As a result, the exact determination of the optimal security parameter for binary LCPs remains an open problem and is not yet explicitly settled.}

\item

{  Choi et al.~\cite{Choi23} initiated the systematic investigation of 
optimal binary LCPs of codes. They provided explicit algebraic constructions of such 
codes to prove that for any $k \geq 2$ and $n \equiv 0 \text{ or } 1 \pmod{2^k - 1}$, 
the maximum security parameter satisfies $d_{\text{LCP}}(n, k) = d_L(n, k)$~\cite[Theorem~15]{Choi23}. 
Furthermore, except for $(n, k) = (4, 3)$ and $(n, k) = (8, 4)$, they explicitly 
constructed optimal binary LCPs achieving $d_{\text{LCP}}(n, k) = d_L(n, k)$ for 
all $2 \leq k \leq 4$ and all $n$~\cite[Remark~1]{Choi23}.}

\item {For $k \geq 5$, Güneri~\cite{G23} explicitly constructed optimal binary LCPs from Solomon-Stiffler (SS) codes; specically, the author proved that for every $k \geq 5$ and minimum distance $d \geq \lceil (k - 1)/2 \rceil 2^{k - 1}$, there exists an optimal binary LCP of length $g_2(k,d)$, where $g_2(k,d)$ denotes the Griesmer bound.}

\item Li \textit{et al.} characterized the optimal value \(d_{\mathrm{LCP}}(n,k)\) in terms of the intersection dimension \(\dim(\C_1\cap \C_2^\perp)\) and the existence of a permutation automorphism \(\sigma\in \mathrm{PAut}(\mathbb{F}_2^n)\) such that \((\C_1,(\sigma(\C_2))^\perp)\) is an \(\ell\)-intersection pair. This characterization resolved the open problem of Carlet \textit{et al.}~\cite{Carlet19} and confirmed the conjecture of Choi \textit{et al.}~\cite{Choi23}. However, their proof is existential rather than constructive, and  their results settle the existence and characterization questions, but the corresponding constructive problem remains open.

\end{itemize}


{ It is well known that Belov codes form a generalized class of Solomon-Stiffler (SS) codes (\cite{S65, T92}). In this paper, based on Güneri's work, we obtain a broader class of optimal binary LCPs from Belov codes.

The remainder of this paper is organized as follows: 
     Section II recalls the Griesmer bound,  Solomon-Stiffler codes, and Belov codes. Section III reviews the main idea of Güneri.  Section IV obtains more LCPs by using Belov codes and the Güneri’s construction. In Section V, we present 
     our main results on new construction of binary optimal LCPs by using SS codes and Belov codes. 
     Section VI concludes this paper.

}


\section{Preliminaries}

In this section, we will recall some need results. 

\subsection{Griesmer bound}
Let $q$ be a prime power  and $\mathbb{F}_{q}$ be the finite field with $q$ elements. Let $\mathbb{F}_{q}^{n}$ be an $n$-dimensional vector space over $\mathbb{F}_{q}$. Let $\mathcal{C}$ be a linear code over $\mathbb{F}_q$ with parameters $[n,k,d]_q$, where $n$ is the length of the code, $k$ is its dimension and $d$ is the minimum distance. In coding theory, constructing optimal codes is essential for improving the reliability of communication systems. A linear code $\mathcal{C}$ with parameters $[n,k,d]_q$ is called optimal if its minimum distance $d$ meets a known theoretical bound. See \cite{HP03} for the detail. {Two binary linear codes are equivalent if one code is obtained from the other by permuting the coordinates.} 

\begin{lemma}{ \cite{G}  \label{Griesmer}
(Griesmer bound)  Let $\mathcal{C}$ be an $[n,k,d]_{q}$-code. Then
$$n \geq g_q(k,d)= \sum\limits_{i=0}^{k-1}\bigg\lceil\frac{d}{q^i}\bigg\rceil,$$  
where $\lceil{\cdot}\rceil$ denotes the ceiling function.  }
\end{lemma}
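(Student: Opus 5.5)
The plan is the classical induction on the dimension $k$ via residual codes. For $k=1$ the bound reads $n\ge d$, which is immediate because a nonzero codeword has weight at most $n$. Now assume the bound holds for all codes of dimension $k-1$, and let $\mathcal{C}$ be an $[n,k,d]_q$ code with $k\ge 2$. Choose a codeword $\mathbf{c}$ of weight exactly $d$, and permute coordinates so that its support is the first $d$ positions. Puncture $\mathcal{C}$ on this support to obtain the residual code $\mathrm{Res}(\mathcal{C},\mathbf{c})$ of length $n-d$.

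The key step is the standard residual code lemma: $\mathrm{Res}(\mathcal{C},\mathbf{c})$ has dimension $k-1$ and minimum distance $d'\ge \lceil d/q\rceil$.

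For the dimension, I will show that the kernel of the projection onto the last $n-d$ coordinates is exactly $\langle \mathbf{c}\rangle$. If some $\mathbf{x}\in\mathcal{C}$ vanishes off the support of $\mathbf{c}$ and is not a multiple of $\mathbf{c}$, then some combination $\mathbf{x}-\alpha\mathbf{c}$ kills one of the first $d$ coordinates. That combination is a nonzero codeword of weight less than $d$, which is a contradiction.

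For the distance, take $\mathbf{x}\in\mathcal{C}$ whose projection has weight $w$. Restricted to the first $d$ coordinates, the $q$ vectors $\mathbf{x}-\alpha\mathbf{c}$, $\alpha\in\mathbb{F}_q$, together agree with $\mathbf{x}$ in total exactly $d$ times: each coordinate $i\le d$ has $c_i\neq0$, so exactly one $\alpha$ gives $x_i=\alpha c_i$. By pigeonhole, some $\alpha$ yields at least $\lceil d/q\rceil$ zeros there. Hence $\mathbf{x}-\alpha\mathbf{c}$ is a nonzero codeword (nonzero because its projection is nonzero) of weight at most $d-\lceil d/q\rceil+w$. Since this weight is at least $d$, we get $w\ge\lceil d/q\rceil$.

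Applying the induction hypothesis to the residual code gives
\[
n-d\ \ge\ g_q(k-1,d')\ \ge\ \sum_{i=0}^{k-2}\Big\lceil \tfrac{\lceil d/q\rceil}{q^i}\Big\rceil .
\]
The first inequality uses the induction hypothesis, and the second uses that $g_q(k-1,\cdot)$ is monotone in its second argument. The nested-ceiling identity $\lceil \lceil d/q\rceil/q^i\rceil=\lceil d/q^{i+1}\rceil$ then turns the right side into $\sum_{i=1}^{k-1}\lceil d/q^i\rceil$. Adding $d$ to both sides gives $n\ge g_q(k,d)$.

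The main obstacle is the residual-code lemma, specifically the averaging argument for $d'\ge\lceil d/q\rceil$ and the check that the projection has full dimension $k-1$. The remaining steps are routine: monotonicity, and the ceiling identity, which follows from writing $d=q m - r$ with $0\le r<q$.
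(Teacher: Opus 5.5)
Your proof is correct, and every step checks out: the kernel of the projection is exactly $\langle \mathbf{c}\rangle$, the pigeonhole count over the $q$ translates $\mathbf{x}-\alpha\mathbf{c}$ gives $d'\ge\lceil d/q\rceil$, and monotonicity together with $\lceil\lceil d/q\rceil/q^i\rceil=\lceil d/q^{i+1}\rceil$ finishes the induction. The paper gives no proof of this lemma and only quotes it from the literature, where this residual-code induction is the standard argument, so there is no different route to compare against.
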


A code is called a {\it Griesmer code} if it attains the Griesmer bound with equality. By using the Griesmer bound, it can be verified that every Griesmer code  
is  dimension-optimal and distance-optimal simultaneously.

\subsection{Solomon-Stiffler codes}
In this subsection, we will recall binary Solomon-Stiffler (SS) codes.
 
Let ${G} = [S_{k}]$ be a $k \times (2^{k} - 1)$ matrix whose columns comprise all nonzero vectors in $\mathbb{F}_{2}^{k}$. Then ${G}$ generates the $[2^{k} - 1, k, 2^{k - 1}]$ simplex code where all the codewords have weight $2^{k - 1}$. We denote the matrix
\[
 \overbrace{  \left[ S_k | \cdots | S_k  \right] }^{h ~\text{times}}  
\]
by $S_{k}(h)$, {where $S_k(1)=S_k$.} The code generated by this matrix, which we will denote by $\mathcal C_{{S}_{k}(h)}$, is the $h$-times replicated simplex code. It is clear that $\mathcal C_{{S}_{k}(h)}$ is also a constant-weight code with parameters $[h(2^{k} - 1), k, h2^{k - 1}]$.

Let
\[
F=\bigcup_{i=1}^{t}U_i,
\qquad
1\le u_1<u_2<\cdots<u_t<k.
\]
For each \(i=1,\dots,t\), let \(U_i\) be a subspace of \(\mathbb F_2^k\) with
$
\dim_{\mathbb F_2}U_i=u_i.
$
By saying that \(U_1,\dots,U_t\) are \emph{disjoint subspaces}, we mean that they intersect trivially as follows:
\[
U_i\cap U_j=\{\mathbf{0}\}\qquad (i\ne j).
\]

For each \(i\), list the \(2^{u_i}-1\) nonzero vectors of \(U_i\) in some fixed order:
\[
U_i\setminus\{\mathbf{0}\}
=
\{v_{i,1},v_{i,2},\dots,v_{i,2^{u_i}-1}\}.
\]
Now form the binary matrix {
\[
M_i=
\begin{bmatrix}
v^T_{i,1} & v^T_{i,2} & \cdots & v^T_{i,2^{u_i}-1},
\end{bmatrix},
\]
where $T$ denotes the matrix transpose.} Hence \(M_i\) is a
$
k\times (2^{u_i}-1)
$
matrix over \(\mathbb F_2\). Since its columns span the subspace \(U_i\), its rank is
$
\operatorname{rank}(M_i)=\dim U_i=u_i.
$
{In the original notation, the same symbol \(U_i\) is used both for the subspace and for this matrix; it is clearer to denote the matrix by \(M_i\).

Finally, every nonzero row of \(M_i\) has weight \(2^{u_i-1}\). Indeed, a row of \(M_i\) is obtained by evaluating on the nonzero vectors of \(U_i\) some coordinate functional
$
\varphi:\mathbb F_2^k\to \mathbb F_2.
$
If \(\varphi|_{U_i}=0\), then the row is zero. Otherwise, \(\varphi|_{U_i}\) is a nonzero linear functional on \(U_i\), so its kernel is a hyperplane in \(U_i\), of size \(2^{u_i-1}\). Hence exactly \(2^{u_i-1}\) vectors of \(U_i\) are mapped to \(0\), and exactly \(2^{u_i-1}\) are mapped to \(1\). Since the zero vector is not a column of \(M_i\), each nonzero row contains exactly \(2^{u_i-1}\) ones. Therefore its weight is
$
2^{u_i-1}.
$
}
In compact form,
$
U_i\le \mathbb F_2^k,~ \dim U_i=u_i,~ U_i\cap U_j=\{\mathbf{0}\} \textrm{ for all } i\ne j,
$
and
\[
M_i=[\,U_i\setminus\{\mathbf{0}\}\,]\in \mathbb F_2^{\,k\times (2^{u_i}-1)},~
\operatorname{rank}(M_i)=u_i,
\]
where the columns of \(M_i\) are precisely the nonzero vectors of \(U_i\).

 Clearly, $U_i$ is contained in $S_k$ for all $i$, and the nonzero rows of such a matrix have weight $2^{u_i - 1}$. 
Provided that $U_1, \ldots, U_t$ are chosen to satisfy
\begin{equation}\label{eq1}
\sum_{i = 1}^{t} u_i \leq hk 
\end{equation}
distinct basis elements can be selected for all $U_i$ from the columns of $S_k(h)$, which contains $hk$ basis elements in its columns. Consequently, $F$ can be formed by the columns derived from the nonzero elements of the $U_i$, as condition in \eqref{eq1} ensures that no column in $S_k(h)$ is punctured more than $h$ times.


\begin{lemma}\label{lem2} (\cite{B74, S65})
{ If $1 \leq u_1 < u_2 < \cdots < u_t < k$
 with $\sum_{i=1}^{t}u_i\le hk$}, then ${G} = [S_{k}(h) \setminus F]$ generates  a binary  SS code with parameters
\begin{equation}\label{eqss}
    [n,k,d]=[ h(2^k - 1) - \sum_{i = 1}^{t} (2^{u_i} - 1),k,
 h2^{k - 1} - \sum_{i = 1}^{t} 2^{u_i - 1}].
\end{equation}
This code meets the Griesmer bound.
\end{lemma}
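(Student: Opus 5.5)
Our plan is to verify the length, the dimension, the minimum distance, and then the Griesmer equality, in that order.

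\emph{Length.} By \eqref{eq1}, the nonzero vectors of all the $U_i$ can be removed from the columns of $S_k(h)$ so that no nonzero vector of $\mathbb F_2^k$ is deleted more than $h$ times. This works because the $U_i$ meet pairwise only in $\mathbf 0$, so each nonzero vector lies in at most one $U_i$ and is removed at most once. We take \eqref{eq1} as the paper's guarantee that $F$ is a legitimate sub-multiset of the columns. The length is then $h(2^k-1)-\sum_i(2^{u_i}-1)$.

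\emph{Weights and dimension.} For a nonzero message $x\in\mathbb F_2^k$, the codeword $xG$ has weight
\[
\mathrm{wt}(xS_k(h))-\sum_{i=1}^t \#\{v\in U_i\setminus\{\mathbf 0\}: x\cdot v=1\}
= h2^{k-1}-\sum_{i=1}^t \epsilon_i(x)\,2^{u_i-1},
\]
where $\epsilon_i(x)=0$ if $x\in U_i^{\perp}$ and $\epsilon_i(x)=1$ otherwise. This is the row-weight argument for $M_i$ given above: the functional $v\mapsto x\cdot v$ is either zero on $U_i$ or takes the value $1$ on exactly $2^{u_i-1}$ vectors of $U_i$.

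Every codeword weight is therefore at least $d:=h2^{k-1}-\sum_i 2^{u_i-1}$. Equality is attained by any $x$ lying in no $U_i^\perp$. Such an $x$ exists because each $U_i^\perp$ is a proper subspace with at most $2^{k-1}$ elements, and a union of $t\le k-1$ proper subspaces cannot cover $\mathbb F_2^k$ when the dimensions are distinct; a counting argument suffices here.

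We expect $d>0$ to follow from \eqref{eq1}, although this is the delicate point. We have $\sum_i 2^{u_i-1}\le \sum_{j=0}^{k-2}2^j<2^{k-1}\le h2^{k-1}$ because the $u_i<k$ are distinct. Hence every nonzero $x$ gives a nonzero codeword, and the dimension is $k$.

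\emph{Griesmer equality (main step).} Write $d=h2^{k-1}-\sum_i 2^{u_i-1}$. For $0\le j\le k-1$ we have
\[
\lceil d/2^j\rceil = h2^{k-1-j}-\sum_{u_i-1\ge j}2^{u_i-1-j}-\Big\lfloor \sum_{u_i-1<j}2^{u_i-1-j}\Big\rfloor .
\]
Since the $u_i$ are distinct, $\sum_{u_i\le j}2^{u_i-1}\le 2^j-1<2^j$, so the floor term is $0$. Summing over $j$ then gives
\[
g_2(k,d)=h(2^k-1)-\sum_i\sum_{j=0}^{u_i-1}2^{u_i-1-j}=h(2^k-1)-\sum_i(2^{u_i}-1)=n .
\]

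The main obstacle is this ceiling computation: the distinctness of the $u_i$ is exactly what prevents carries, and that is where the hypothesis $u_1<\cdots<u_t$ is essential. The realizability of $F$ inside $S_k(h)$ under \eqref{eq1} is the other point needing care, and for it we rely on the paper's preceding discussion.
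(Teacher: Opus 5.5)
Your metric computations are correct, and they are more explicit than the paper. The paper gives no proof of this lemma beyond citing Belov and Solomon--Stiffler and two preceding remarks: nonzero rows of $M_i$ have weight $2^{u_i-1}$, and under \eqref{eq1} distinct basis elements can be chosen for the $U_i$ among the $hk$ basis columns of $S_k(h)$. The following steps all check out:
\begin{itemize}
\item your weight formula;
\item the count $\sum_i |U_i^{\perp}|\le \sum_{m=1}^{k-1}2^m<2^k$, which gives attainment of $d$;
\item the bound $d\ge h2^{k-1}-(2^{k-1}-1)\ge 1$, which gives dimension $k$;
\item the carry-free evaluation of $\lceil d/2^j\rceil$, which gives $g_2(k,d)=n$.
\end{itemize}
Notice, however, that none of these steps uses \eqref{eq1}. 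In particular, $d>0$ follows from the distinctness of the $u_i$ alone, not from \eqref{eq1} as you suggest.

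So realizability is the only place the hypothesis enters, and your justification there does not work. If the $U_i$ met pairwise only in $\mathbf 0$, deletion from $S_k(h)$ would be possible for every $h\ge 1$, and \eqref{eq1} would play no role. Conversely, in the cases where \eqref{eq1} matters, such subspaces need not exist. In the paper's own example with $h=2$, $k=5$, $(u_1,u_2)=(3,4)$, any $3$- and $4$-dimensional subspaces of $\mathbb F_2^5$ meet in dimension at least $2$. G\"uneri's maximal anticode, as described in the paper, even takes $U_{k-4}\subset U_{k-3}$.

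What is needed is the multiset argument behind the paper's remark:
\begin{enumerate}
\item List $h$ copies of a basis, $e_1,\dots,e_k,e_1,\dots,e_k,\dots$, giving $hk$ entries.
\item Cut an initial segment into consecutive blocks $B_1,\dots,B_t$ of lengths $u_1,\dots,u_t$; this is possible by \eqref{eq1}.
\item Set $U_i=\langle B_i\rangle$. Since $u_i<k$, each block consists of distinct basis vectors, so $\dim U_i=u_i$.
\item If $v\neq\mathbf 0$ has $e_j$ in its basis expansion, then $v\in U_i$ forces $e_j\in B_i$. Since $e_j$ occurs $h$ times and the blocks are disjoint, this holds for at most $h$ blocks.
\end{enumerate}
Hence every nonzero vector lies in at most $h$ of the $U_i$, and $F$, counted with multiplicity, is a sub-multiset of the columns of $S_k(h)$. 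Your weight formula holds verbatim for this multiset deletion, because overlaps among the $U_i$ play no role in it. With this replacement, the rest of your proof goes through unchanged.
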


\subsection{Belov codes}
In this subsection, we will recall binary Belov codes, which is a generalization of SS codes. SS codes and Belov codes are two well-known Griesmer codes, see \cite{T81}-\cite{T03}. 

Let $F_1 = W \setminus A,$ where $W$ is a subspace of $\mathbb{F}_2^k$ with $\dim W = \theta \geq 3$, $A \subset W$, $\theta - 1 < |A| \leq \theta + 1$ and any set of $\leq \theta$ points in $A$ are linearly independent. Belov codes additionally remove $F_1 = W \setminus A$, which yield a larger family of parameters that covers cases unreachable by SS codes.

\begin{lemma}\label{lem3}(\cite{B74, T92})
If { $3\le \theta\le u_1<\cdots< u_t<k$
 with $\sum_{i=1}^{{min}\{h+1,t\}}u_{t+1-i}\le hk$} and $F \cap F_1 = \emptyset$, then ${G} = [S_{k}(h) \setminus (F \cup F_1)]$ generates a binary Belov code with parameters
\begin{equation}
    [n,k,d]=[h(2^k - 1) - \sum_{i = 1}^{t} (2^{u_i} - 1) - \sum_{i = 1}^{\theta - 2} ( 2^{i + 1} - 1 )- \varepsilon,k,h2^{k - 1} - \sum_{i = 1}^{t} 2^{u_i - 1} - \sum_{i = 1}^{\theta - 2} 2^{i} - \varepsilon]
\end{equation}
where $0 \leq \varepsilon = \theta + 1 - |A| \leq 1$ and $F$ was defined in Section II. B.
This code meets the Griesmer bound.
\end{lemma}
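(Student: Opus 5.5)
The plan is to compute the length, dimension and minimum distance of the code generated by $G=[S_k(h)\setminus(F\cup F_1)]$ directly, using weight counting over projective points, and then to check equality in Lemma~\ref{Griesmer}.

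\textbf{Length and dimension.} Since $F\cap F_1=\emptyset$, the deleted column multiset has size $\sum_{i=1}^t(2^{u_i}-1)+|W\setminus A|$. Here
\[
|W\setminus A|=2^\theta-1-|A|=2^\theta-1-(\theta+1-\varepsilon),
\]
and this equals $\sum_{i=1}^{\theta-2}(2^{i+1}-1)+\varepsilon$ after checking $\sum_{i=1}^{\theta-2}(2^{i+1}-1)=2^\theta-4-(\theta-2)=2^\theta-\theta-2$. This gives the stated length. The hypothesis $\sum_{i=1}^{\min\{h+1,t\}}u_{t+1-i}\le hk$ is used, as in the discussion before Lemma~\ref{lem2}, to guarantee that the removed points can be taken from distinct copies, so that no nonzero vector is deleted more than $h$ times. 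Since $u_t<k$ and $\theta\le u_1<k$, a basis of $\mathbb F_2^k$ survives in some copy, so the rank is $k$.

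\textbf{Minimum distance.} For a nonzero functional $\varphi$, the codeword weight equals
\[
h2^{k-1}-\sum_i|\{x\in U_i\setminus\{0\}:\varphi(x)=1\}|-|\{x\in W\setminus A:\varphi(x)=1\}|.
\]
Each $U_i$ term is either $0$ or $2^{u_i-1}$. The $W\setminus A$ term equals $|\{x\in W:\varphi(x)=1\}|-|\{a\in A:\varphi(a)=1\}|$. This is $0$ if $\varphi|_W=0$; otherwise it is $2^{\theta-1}-m$, where $m$ is the number of points of $A$ off the hyperplane $\ker\varphi\cap W$. The independence condition says any $\theta$ points of $A$ span $W$, so the hyperplane contains at most $\theta-1$ points of $A$. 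Hence $m\ge |A|-(\theta-1)=2-\varepsilon$, and the term is at most $2^{\theta-1}-2+\varepsilon=\sum_{i=1}^{\theta-2}2^i+\varepsilon$.

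Summing the maximal removals gives the lower bound $d\ge h2^{k-1}-\sum_i 2^{u_i-1}-\sum_{i=1}^{\theta-2}2^i-\varepsilon$. To show equality, choose $\varphi$ vanishing on no $U_i$ and on no suitable hyperplane-defining part of $W$. For the Griesmer claim, I would write $d$ in $2$-adic form and evaluate $\sum_j\lceil d/2^j\rceil$, exactly as for Lemma~\ref{lem2}. Because the $u_i$ are distinct and the exponents $1,\dots,\theta-2$ lie below $\theta\le u_1$, the binary digits do not collide. The standard identity $\sum_{j=0}^{k-1}\lceil 2^{u-1}/2^j\rceil=2^u-1$ for $u\le k$ then reduces $g_2(k,d)$ to $n$. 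The $\varepsilon$ correction contributes exactly $1$ since $\lceil(-1)/2^j\rceil$ shifts are absorbed. Actually, the rigorous route is to note that $d\le$ any codeword weight combined with $n\ge g_2(k,d)$ forces equality once $n=g_2(k,d)$ is computed. I would carry out this computation carefully.

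\textbf{Main obstacle.} The hardest step is justifying the multiplicity bound: that under the weaker hypothesis on the $h+1$ largest $u_i$, together with $\theta\le u_1$, the sets $U_i$ and $W\setminus A$ can be embedded in $S_k(h)$ so that no point is removed more than $h$ times. Disjointness only constrains pairs, so a point may lie in many $U_i$. I would try a greedy assignment of copies, largest subspaces first, arguing that at most $h$ of the sets $U_i$ (and $W$) through any point need separate copies. If this is too delicate, I would instead cite Belov's original argument~\cite{B74,T92} for realizability and keep the weight computation above as the self-contained part.
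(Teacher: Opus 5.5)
The paper gives no proof of this lemma; it only cites Belov and [T92]. Your argument therefore has to stand on its own, and its skeleton is sound.

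\textbf{What already works.} The length count is right. The hyperplane count correctly gives $d(\C)\ge d_0:=h2^{k-1}-S$, where $S=\sum_i 2^{u_i-1}+\sum_{i=1}^{\theta-2}2^i+\varepsilon$. Once $n=g_2(k,d_0)$ is shown, strict monotonicity of $g_2(k,\cdot)$ together with Lemma~\ref{Griesmer} forces $d(\C)=d_0$. So you can drop the vague ``choose $\varphi$ vanishing on no $U_i$'' step.

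\textbf{The Griesmer computation is wrong as written.} The identity $\sum_{j=0}^{k-1}\lceil 2^{u-1}/2^j\rceil=2^u-1$ is false. Each term with $j\ge u$ equals $1$, so the sum is $2^u-1+(k-u)$; for $u=3$, $k=6$ it is $10$, not $7$. Ceilings do not distribute over the subtracted digits; the correct computation uses floors:
\begin{itemize}
\item Since $2^j\mid h2^{k-1}$ for $j\le k-1$, we have $\lceil d_0/2^j\rceil=h2^{k-1-j}-\lfloor S/2^j\rfloor$.
\item The binary digits of $S$ are distinct: exponent $0$ if $\varepsilon=1$, then $1,\dots,\theta-2$, then $u_i-1\ge\theta-1$. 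Hence the digits below $2^j$ total less than $2^j$, and $\lfloor S/2^j\rfloor=\sum_e\lfloor 2^e/2^j\rfloor$ over the digits $2^e$ of $S$.
\item For $e\le k-1$, $\sum_{j=0}^{k-1}\lfloor 2^e/2^j\rfloor=2^{e+1}-1$.
\end{itemize}
Summing gives exactly $n$, with the $\varepsilon$-digit contributing $1$.

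\textbf{The rank argument needs replacing.} ``A basis survives in some copy'' is unjustified, for instance when $h=1$. Instead, $S\le\sum_{e=0}^{k-2}2^e=2^{k-1}-1$, so $d_0\ge 1$. Then every nonzero $\varphi$ yields a nonzero codeword, and $\operatorname{rank}G=k$.

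\textbf{Your ``main obstacle'' comes from a misreading.} With $F$ as in Section~II.B, $U_i\cap U_j=\{\mathbf 0\}$ for $i\ne j$. So a nonzero vector lies in at most one $U_i$; your claim that pairwise disjointness still lets a point lie in many $U_i$ is false. Also, $F\cap F_1=\emptyset$ puts the points of $W\setminus A$ in no $U_i$. Thus every point is deleted at most once, your weight formula holds outright, and the inequality $\sum_{i=1}^{\min\{h+1,t\}}u_{t+1-i}\le hk$ plays no role.

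The paper later uses a more general setting, with overlapping or even nested $U_i$ and $F$ a multiset. There the expression $S_k(h)\setminus(F\cup F_1)$ already presupposes that no point lies in more than $h$ of the $U_i$, and your count goes through verbatim. What the inequality on the $h+1$ largest $u_i$ buys is only the \emph{existence} of such a configuration. That is Belov's theorem, and a greedy sketch does not deliver it. Either take realizability as a hypothesis or cite it, but do not claim without proof, as your first paragraph does, that the hypothesis ``guarantees'' it.
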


We present a numerical example to show that the Belov construction possesses more flexible parameters than the Solomon--Stiffler (SS) construction.

\begin{example}
{\rm
Let \(h=1\) and \(k=6\). Then the corresponding binary SS codes have parameters
$
[48,6,24] \text{ for } t=1,\ u_1=4,
$
and
$
[45,6,22] \text{ for } t=2,\ u_1=2,\ u_2=4,
$
respectively. Belov~\cite{B742} gave a new construction of Griesmer codes by deleting more columns based on SS codes. Using Magma, for \(h=1\), \(k=6\), \(t=1\), and \(u_1=4\), we obtain the following table.
\[
\begin{array}{c|cc}
\theta & \varepsilon=0 & \varepsilon=1\\ \hline
3 & [45,6,22] & [44,6,21]\\
4 & [38,6,18] & [37,6,17]
\end{array}
\]
Thus the corresponding binary Belov codes have parameters \([45,6,22]\) and \([44,6,21]\) when \(\theta=3\), and \([38,6,18]\) and \([37,6,17]\) when \(\theta=4\), for \(\varepsilon=0\) and \(\varepsilon=1\), respectively.
}
\end{example}

  We can construct SS codes that share the same parameters as those from the Belov construction but are not equivalent to them.

\begin{example}
{\rm
Let \(h=2\) and \(k=5\) . The Solomon-Stiffler construction {used by  Güneri} with
\[
t=4, (u_1,u_2,u_3,u_4)=(1,2,3,4)
\]
yields a binary \([36,5,17]\) code. On the other hand, the Belov construction starting from the Solomon--Stiffler code with
$
t=2, (u_1,u_2)=(3,4)
$
and then deleting the additional set \(F_1=W\setminus A\) with \(\theta=3\) and \(|A|=3\) also yields a binary \([36,5,17]\) code. {Their parameters are identical, but the two binary codes are not  equivalent:} the former deletes the union of subspaces of dimensions \(1,2,3,4\), whereas the latter deletes subspaces of dimensions \(3,4\) together with the set \(W\setminus A\). The deleted column sets have different structures, and a direct computation shows the two codes have different weight distributions. Hence they are inequivalent.   
}
\end{example}





 Belov \cite{B742} presented the  necessary and sufficient condition for the existence of SS codes and Belov codes.

\begin{lemma}(\cite[Theorem 3]{B742}, \cite{T83})\label{lem4} For any two integers $ k$ and $d = h2^{k - 1} - \sum_{i = 1}^{t} 2^{u_i - 1}$ with $1 \leq u_1 < u_2 < \cdots < u_t < k$, there exists a $[g_2(k,d), k,d]$ of the Solomon-Stiffler type or of the Belov type if and only if either 
 $$\sum_{i=1}^{{min}\{h+1,t\}}u_{t+1-i}\le hk \mbox{ or } u_{i+1} - u_{i} = 1, ~ \mbox{ for } i=1, 2, \ldots,t-1, ~\mbox{ and }u_1 \in \{1,2\}.$$
\end{lemma}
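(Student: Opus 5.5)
The plan is to prove the two directions separately. Sufficiency will come from the constructions in Lemma~\ref{lem2} and Lemma~\ref{lem3}, used in the anticode (deleted column set) language. Necessity will come from the minihyper characterization of Griesmer codes. Throughout I view a $[g_2(k,d),k,d]$ code with $d=h2^{k-1}-\sum_{i=1}^t2^{u_i-1}$ as $S_k(h)$ with a multiset $\mathcal F$ of columns deleted. Counting gives $|\mathcal F|=\sum_i(2^{u_i}-1)$. Every hyperplane of $\mathbb F_2^k$ meets $\mathcal F$ in exactly $\sum_i(2^{u_i-1}-1)$ points, counted with multiplicity, and no point occurs in $\mathcal F$ more than $h$ times.

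\emph{Sufficiency.} First suppose $\sum_{i=1}^{\min\{h+1,t\}}u_{t+1-i}\le hk$. I would choose $U_t,U_{t-1},\dots$ greedily: the largest $h+1$ of them fit as pairwise trivially-intersecting subspaces inside $h$ copies of $S_k$, and the remaining smaller ones can be placed in the leftover room. This is the argument around \eqref{eq1}, refined to the weaker hypothesis. Lemma~\ref{lem2} then gives the code.

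Now suppose instead that the $u_i$ are consecutive, $u_i=u_1+i-1$, with $u_1\in\{1,2\}$. Put $s=u_t$. If $u_1=1$, then
\[
\sum_{i=1}^{t}(2^{u_i}-1)=2^{s+1}-s-2=|W\setminus\{\mathbf 0\}|-(s+1),
\]
where $\dim W=s+1$. So the deleted set can be realized as $W\setminus A$ with $\theta=s+1$, $|A|=\theta$, and $A$ a basis of $W$, i.e.\ $\varepsilon=1$. If $u_1=2$, the deleted size is $2^{s+1}-s-3$, which is $W\setminus A$ with $|A|=\theta+1$, i.e.\ $\varepsilon=0$; here $A$ is a frame. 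Then I would check that the parameter formula of Lemma~\ref{lem3}, with no extra $F$ (or with $F$ absorbing any remaining terms), reproduces $d$. I would also verify the hyperplane-intersection count directly. The small cases $\theta<3$, i.e.\ $s\le 2$, are handled by hand: there the deletion is a line or a point set already covered by Lemma~\ref{lem2}.

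\emph{Necessity.} Assume a Griesmer code with this $d$ exists, and that the consecutive condition fails. I would invoke the Hamada--Helleseth characterization of $\{\sum_i 2^{u_i}-1,\ \sum_i 2^{u_i-1}-1;k-1,2\}$-minihypers. Outside the Belov exceptional configurations, this forces $\mathcal F$ to be a multiset union of subspaces of dimensions $u_1,\dots,u_t$. Because each point has multiplicity at most $h$, the $h+1$ largest subspaces cannot all share a common nonzero vector. A dimension count via iterated Grassmann inequalities, $\dim(U\cap U')\ge\dim U+\dim U'-k$, shows that $h+1$ subspaces whose dimensions sum to more than $hk$ always have a common nonzero vector. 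That common vector would be deleted $h+1$ times, a contradiction.

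The main obstacle is the step that reduces to unions of subspaces. It requires the full minihyper structure theorem, together with a careful treatment of the exceptional Belov-type configurations; these are exactly where the consecutive-$u_i$ case enters. I would cite \cite{B742,T83} for that classification rather than reprove it.
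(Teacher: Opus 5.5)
\textbf{Gap in the necessity half.} You start from an arbitrary Griesmer code and defer to a citation the claim that the Belov configurations occur ``exactly where the consecutive-$u_i$ case enters''. For $h\ge 2$ that claim is false, and so is the conclusion it is meant to support.

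Take $k=9$ and $d=321$. Then $h=2$ and $2\cdot 2^8-d=191=2^7+2^5+2^4+2^3+2^2+2^1+2^0$, so $(u_1,\dots,u_7)=(1,2,3,4,5,6,8)$. Here $u_7+u_6+u_5=19>18=hk$, and the list is not consecutive. Now let $U$ be a hyperplane of $\mathbb F_2^9$ and $W$ a $7$-dimensional subspace with basis $A$. Delete $U$ from the first copy of $S_9$ and $W\setminus A$ from the second.
\begin{itemize}
\item The length is $1022-255-120=647=g_2(9,321)$.
\item For $x\neq\mathbf 0$, the first block has weight at least $128$.
\item The second block has weight $256$ if $x\perp W$. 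Otherwise some $a\in A$ has $x\cdot a=1$, so its weight is at least $256-64+1=193$.
\end{itemize}
So a $[647,9,321]$ Griesmer code exists. It is the code of Lemma~\ref{lem3} with $t=1$, $u_1=8$, $\theta=7$, $\varepsilon=1$, with $F$ and $F_1$ lying in different copies. Once a point may be deleted twice, a piece $W\setminus A$ only forces the \emph{bottom} segment $\{1\text{ or }2,\dots,\theta-1\}$ of the list to be consecutive.

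Hence the statement (which the paper only quotes) can be correct only for $h=1$, or if ``Belov type'' requires $F$ and $F_1$ to be disjoint as point sets. In those readings you need no minihyper theory, because the type hypothesis already hands you the deleted multiset.
\begin{itemize}
\item Your Grassmann count settles the Solomon--Stiffler case.
\item For a Belov piece, point-disjointness gives $U_j\cap W\subseteq A\cup\{\mathbf 0\}$. Since no three points of $A$ are dependent, $\dim(U_j\cap W)\le 1$, i.e.\ $u_j+\theta-1\le k$.
\item With $\theta\le u_1$, a short count then shows every mixed configuration already satisfies the sum condition for the full list (for $h=1$ this is immediate). So only the pure Belov case, i.e.\ the consecutive one, is genuinely new.
\end{itemize}

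\textbf{Error in the sufficiency sketch.} For $h\ge 2$ the largest $h+1$ subspaces generally cannot be pairwise trivially intersecting. Nor can they be split among the $h$ copies with trivially intersecting ones in each copy. For $h=2$, $k=7$ and top dimensions $6,5,3$, the hypothesis $14\le 14$ holds, yet every pair has dimension sum exceeding $7$. What is needed is only that no point lies in more than $h$ of the $U_i$, e.g.\ $\langle e_1,\dots,e_6\rangle$, $\langle e_7,e_1,\dots,e_4\rangle$, $\langle e_5,e_6,e_7\rangle$.

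Moreover, only the top $h+1$ dimensions are constrained, so $\sum_i u_i$ may exceed $hk$. There is then no ``leftover room'' in the sense of \eqref{eq1}. Placing the smaller subspaces is the actual content of Belov's theorem, which you must prove or cite.

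A minor point: the minihyper condition is $|H\cap\mathcal F|\ge\sum_i(2^{u_i-1}-1)$ for every hyperplane $H$, with equality for some $H$. It is not equality for all $H$.
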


 \section{LCPs constructed from SS codes: Güneri's idea}


In this section, we will review the key idea behind the construction of optimal binary LCPs  in {\cite{Choi23,G23,Li25}. 

Firstly, in order to review the construction method of \cite[Theorems 16, 17, 18]{Choi23},}
let \(\C\) and \(\D\) be binary linear codes generated by $G = [I_k : M]$ and $H = [N : I_{n-k}]$, respectively. Then \((\C, \D)\) forms an LCPs if and only if \(I_k + MN\) is invertible \cite[Proposition 13]{Choi23}. Note that the dual code \(\D^\perp\) is generated by \(\widetilde{H} = [I_k : N^T]\). 
The construction begins with a generator matrix \(G\) of an optimal code. If \(N^T\) can be obtained from \(M\) by permuting its rows and columns, then \(\D^\perp\) shares the same parameters as \(\C\), and hence is also optimal. The goal is then to find suitable permutations yielding such an \(N^T\) for which \(I_k + MN\) is invertible. 
Using this approach, it has been shown that families of optimal binary LCPs  can be constructed in two cases: when \(n \equiv 0, 1 \pmod{2^k - 1}\) for any \(k \geq 2\) \cite[Theorem 15]{Choi23}, and for \(2 \leq k \leq 4\) and all \(n\), with exactly two exceptions \cite[Theorems 16, 17, 18]{Choi23}.


Next, we describe the called maximal anticode, which was introduced in \cite{G23} for constructing optimal LCP from Solomon‑Stiffler codes. Given $k\ge 5$, a maximal anticode corresponds to an anticode $F$ assembled by collecting all nonzero column vectors from a nested sequence of disjoint subspaces $U_1,U_2,\ldots,U_{k-3}\subset \mathbb{F}_2^k$
with strictly increasing dimensions $u_1=3,\ u_2=4,\ldots, u_{k-3}=k-1$.
Each subspace $U_i$ contributes its full set of nonzero columns to the anticode $F$.
When the condition $\sum_{i=1}^{k-3} u_i \le hk$ holds, these subspaces can be chosen such that no column of $S_k(h)$ is punctured more than $h$ times.
For this maximal anticode, the resulting Solomon‑Stiffler generator matrix $G = S_k(h)\setminus F$ is guaranteed to contain two copies of the $k\times k$ identity matrix $I_k$ as well as the special columns $x_1=(1,1,\ldots,1)^T,x_2=(0,1,\ldots,1)^T$ defined in Lemma 3 of \cite{G23}.

To construct binary optimal LCPs, Güneri \cite{G23} chose $\C$ to be a binary SS  code, obtained by deleting  a set $F$ of columns from the generator matrix $S_k(h)$, resulting in $G=[S_{k}(h) \setminus F]$. Main steps are listed as follow:

(1) Consider the case of maximal anticode for $u_1 \geq 3$ in the sense that \({G}^{\prime }\) consists of nonzero columns generated by subspaces $U_1,U_2,\ldots ,U_{k-3} \subset  {\mathbb{F}}_{2}^{k}$ with dimensions $u_1 = 3,u_2 = 4,\ldots ,u_{k - 4} = k-2,u_{k - 3} = k-1$.  First define the subspace $U_{k-3}$ of largest dimension:
\begin{equation}\label{eq2}
{U}_{k-3} \mathrel{\text{ := }} \left\{  {\left( {{a}_{1},\ldots ,{a}_{k}}\right)  \in  {\mathbb{F}}_{2}^{k} : \mathop{\sum }\limits_{{i = 1}}^{k}{a}_{i} = 0}\right\}.
\end{equation}
The following is a basis for $U_{k-3}$:  
\begin{equation}\mathcal{B} = \{ \left( {1,0,\ldots ,0,1}\right) ,\left( {0,1,0,\ldots ,0,1}\right) ,\ldots ,\left( {0,\ldots ,0,1,1}\right) \}.\end{equation}
Then  the second largest subspace  $U_{k-4}$ has dimension $k-2$, which it is spanned by
$\mathcal{B} \setminus  \{ \left( {0,1,0,\ldots ,0,1}\right) \}$. To make sure that there are enough copies $S_k$ to delete those subspaces, the condition $d \geq \lceil (k - 1)/2 \rceil 2^{k - 1}$ is required. Hence $\sum_{i=1}^{t}u_i\le hk$ holds and the induced  SS code is a Griesmer code.

(2) The desired columns (i.e. two copies of $I_k$, $x_1=(1,1,\ldots,1)^T$ and $x_2=(0,1,\ldots,1)^T$ ) are guaranteed to exist in $G$ at the end of forming the maximal anticode without further puncturing in the first two copies of $S_k$. This matrix $G$ is then transformed into systematic form $[I_k : M]$, where $M$ is a $k\times (n-k)$ matrix.  In the SS construction,  $MM^T = I_k$ when $u_1 \geq 3$.

(3) The final step is to find, from $M$, an  $(n-k)\times k$ matrix $N$ such that $I_k+MN$ is invertible. This guarantees that the code \(\D\) generated by
$
[N:I_{n-k}]
$
is complementary to \(\mathcal C\), i.e.,
\[
\mathcal C\cap \D=\{0\},\qquad \mathcal C+\D=\mathbb F_2^n.
\]
Moreover, \(N\) is chosen so that
$
d(\D^\perp)=d(\mathcal C).
$
It should be mentioned that the optimality is achieved precisely in this step. Indeed, since \(\mathcal C\) is a Griesmer SS code, we have
$
d(\mathcal C)=d_L(n,k).
$
Therefore
$
d_{\mathrm{LCP}}(\mathcal C,\D)
=
d_L(n,k)
=
d_{\mathrm{LCP}}(n,k),
$
so the resulting binary LCP is optimal.

Based on this {explicit construction} approach, Güneri \cite{G23} proved the following theorem by using that for any $k$ and $d \geq \lceil (k - 1)/2 \rceil 2^{k - 1}$, there is a binary $[g_2(k,d),k,d]$ SS code, see \cite{BM}.

\begin{theorem}\cite{G23}\label{sun-lem14}
For any $k \geq 5$ and $d \geq \lceil (k - 1)/2 \rceil 2^{k - 1}$, there exists an optimal binary $[g_2(k,d),k]$ LCP  $(\C,\D)$ with the security parameter $d$.
\end{theorem}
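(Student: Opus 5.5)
Write $d = h2^{k-1} - \sum_{i=1}^t 2^{u_i-1}$ with $1 \le u_1 < \cdots < u_t \le k-1$; this is the binary expansion of $h2^{k-1} - d$. The hypothesis $d \ge \lceil (k-1)/2\rceil 2^{k-1}$ forces $h \ge \lceil (k-1)/2 \rceil + 1$, since $d < h2^{k-1}$ whenever $t \ge 1$. Then
$$\sum_i u_i \le \sum_{j=1}^{k-1} j = k(k-1)/2 \le hk,$$
so (\ref{eq1}) holds. By Lemma~\ref{lem2} the matrix $G=[S_k(h)\setminus F]$ generates a Griesmer $[g_2(k,d),k,d]$ SS code $\C$, and hence $d(\C)=d_L(n,k)$. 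It therefore suffices to build $\D$ with $(\C,\D)$ an LCP and $d(\D^\perp)=d$. Following \cite[Proposition 13]{Choi23}, I will put $G$ in systematic form $[I_k : M]$ and look for $N$ such that $N^T$ is $M$ with rows and columns permuted and $I_k+MN$ is invertible.

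\textbf{Step 1: realizing the anticode.} Embed $F$ inside the maximal anticode of Section III: take $U_{k-3}$ as in (\ref{eq2}), the nested chain spanned by subsets of $\mathcal{B}$, and choose the subspaces corresponding to the actual $u_i$ from this chain, adjusting for $u_i\in\{1,2\}$. Spread the punctures over the $h$ copies of $S_k$ so that no column is removed more than $h$ times. Using $h \ge \lceil(k-1)/2\rceil+1$, I would arrange that the first two copies of $S_k$ keep two copies of $I_k$ together with $x_1=(1,\ldots,1)^T$ and $x_2=(0,1,\ldots,1)^T$. Since $U_{k-3}$ consists of the even-weight vectors, it avoids $x_1$ when $k$ is odd and $x_2$ when $k$ is even, and the unit vectors $e_i$ have weight $1$, so these columns survive the largest deletion; the smaller subspaces are spanned by even-weight vectors as well.

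\textbf{Step 2: the Gram-type identity.} After permuting columns so that one copy of $I_k$ comes first, $M$ consists of the remaining columns. Each row of $G$ has weight $d$ or more, and inner products of rows are governed by the weight of the sum of two rows. I would compute $GG^T = I_k + MM^T$ modulo $2$ directly: since every row of $S_k$ has weight $2^{k-1}$ and every pair of rows overlaps in $2^{k-2}$ positions, $S_k(h)$ contributes $0$ modulo $2$ when $k\ge 3$. The deleted subspaces contribute $2^{u_i-1}$ and $2^{u_i-2}$, which vanish modulo $2$ when $u_i\ge 3$. This yields $MM^T$ equal to $I_k$ (or an explicitly known matrix if some $u_i\le 2$).

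\textbf{Step 3: choosing $N$.} In the clean case, take $N=P M^T Q$ with suitable permutations $P,Q$; the simplest choice is $N=M^T$ with one row permutation. Then $I_k+MN$ is $I_k + MM^T$ up to a controlled perturbation, and I would exploit the second copy of $I_k$ and the columns $x_1,x_2$ to swap two columns so that $I_k+MN$ becomes visibly invertible, as in Güneri's Lemma 3. Because $N^T$ is a row-and-column permutation of $M$, $\D^\perp=\langle [I_k:N^T]\rangle$ is equivalent to $\C$, so $d(\D^\perp)=d$.

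\textbf{Main obstacle.} The hardest point is Step 3 together with the small cases $u_1\in\{1,2\}$, where $MM^T\ne I_k$. There I would first try the same column-swap trick, computing the rank-one or rank-two correction to $MM^T$ contributed by the small subspaces. If that fails, I would absorb the small subspaces into $F$ differently, for example by splitting a $2^{u}$ deletion, so that only $u_i\ge 3$ deletions touch the retained identity and special columns.
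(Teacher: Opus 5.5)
Your plan is G\"uneri's construction as the paper outlines it in Section III: a Griesmer SS code, systematic form, $MM^T$ read off from $GG^T \bmod 2$, then a column swap plus a row permutation. As written, however, it only proves the theorem when $t=0$ or $u_1\ge 3$, and that is a genuine gap.

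\textbf{The main gap: $u_1\in\{1,2\}$.} These cases occur exactly when $d\not\equiv 0 \pmod 4$, which is three quarters of the range (already $d=33,34,35$ for $k=5$). You leave them open, and your fallback cannot work. The $u_i$ are forced by $d$ through the binary expansion of $h2^{k-1}-d$. A deleted $\langle v\rangle$ adds $vv^T$ to $MM^T$, and a deleted $\langle a,b\rangle$ adds $ab^T+ba^T=aa^T+bb^T+(a+b)(a+b)^T$. Both terms are nonzero wherever the subspaces are placed, and they cannot cancel each other, since $vv^T$ has nonzero diagonal and $ab^T+ba^T$ has zero diagonal. So no relocation or ``splitting'' gives $MM^T=I_k$.

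The missing step is to absorb these terms into $\sigma$. For disjoint transpositions $\{y_r,z_r\}$ of columns of $M$ one has $M\sigma(M)^T=MM^T+\sum_r (y_r+z_r)(y_r+z_r)^T$. Hence each correction $ww^T$, with $w\in\{v,a,b,a+b\}$, is removed by swapping two surviving columns with $y+z=w$. For example, take $y=e_j$ from a second copy of $I_k$ and $z=w+e_j$ for $j$ in the support of $w$; in your setup $z$ has odd weight and is never deleted. This is why two copies of $I_k$ are needed. Adding the swap $x_1\leftrightarrow x_2$ then gives $M\sigma(M)^T=I_k+e_1e_1^T$ in every case. For the bare existence statement there is also a shortcut: the paper notes it follows at once from Theorem~\ref{thm6}, because $h\ge 2$ here and the SS code is Griesmer.

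\textbf{Step 1 needs repair.} A nested chain is not admissible, because the nonzero vectors of the smallest subspace would be deleted $t$ times, and $t>h$ occurs (for $k=5$, $d=33$: $u=(1,2,3,4)$, $h=3$). Instead, take the subspaces spanned by cyclically consecutive windows of $\mathcal{B}$ of lengths $u_1,\dots,u_t$. Since $\sum u_i\le k(k-1)/2\le h(k-1)$:
\begin{itemize}
\item every vector of $U_{k-3}$ is deleted at most $h$ times;
\item no odd-weight vector is ever deleted, so all $h$ copies of $I_k$ survive;
\item the even-weight one of $x_1,x_2$ lies in at most two windows, so it survives because $h\ge 3$.
\end{itemize}

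\textbf{Step 3 needs to be made precise.} A row permutation alone never works, since $(I_k+Q)\mathbf{1}=0$ for every permutation matrix $Q$. With $M\sigma(M)^T=I_k+e_1e_1^T$ and $N^T=P\sigma(M)$, you need $P$ to be a single $k$-cycle. Then $MN=(I_k+e_1e_1^T)P^T$ is nilpotent, so $I_k+MN$ is invertible. Equivalence also holds, because $\langle[I_k:P\sigma(M)]\rangle=\langle[P^{-1}:\sigma(M)]\rangle$ is equivalent to $\mathcal{C}$.
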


{Finally,} Li, Shi, and Ling \cite{Li25} proved that if there exists a binary linear $[n,k,d]$ code $\C$ with the all-one vector ${\bf 1}\notin \mbox{Hull}(\C):=\C \cap \C^{\bot}$, then there exists a binary $[n,k]$ LCP  with security parameters $d$. 

\begin{lemma} \cite[Theorem 16]{Li25} \label{lem6}
    Assume that $k\ge 2$, $d = h2^{k - 1} - \sum_{i = 1}^{t} 2^{u_i - 1}$ with $1 \leq u_1 < u_2 < \cdots < u_t < k$, and $\sum_{i=1}^{{min}\{h+1,t\}}u_{t+1-i}\le hk $.
      Let $\C$ be a SS code with the parameters $[g_2(k,d), k, d]$ in Lemma \ref{lem4}. Then ${\bf 1} \in \mbox{Hull}(\C)$ if and only if $h=1, t=1, $ and $u_1=k-1$.

\end{lemma}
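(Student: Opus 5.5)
The plan is to reduce both memberships $\mathbf 1\in\C^{\perp}$ and $\mathbf 1\in\C$ to counting statements about the deleted column set $F$. Recall $G=[S_k(h)\setminus F]$, so $n=h(2^k-1)-\sum_i(2^{u_i}-1)$. Each codeword is $c_x=x^TG$ for $x\in\mathbb F_2^k$. For $x\neq\mathbf 0$, the weight of $c_x$ is $h2^{k-1}$ minus the number of deleted columns $v$ with $x\cdot v=1$. Since $U_i\cap x^{\perp}$ is either all of $U_i$ or a hyperplane of $U_i$, this gives
\[
\mathrm{wt}(c_x)=h2^{k-1}-\sum_{i:\,U_i\not\subseteq x^{\perp}}2^{u_i-1}.
\]

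\emph{Step 1 ($\mathbf 1\in\C^\perp$).} Because $\mathbf 1\cdot c=\mathrm{wt}(c)\bmod 2$, we have $\mathbf 1\in\C^\perp$ iff every codeword has even weight. By the formula, the parity of $\mathrm{wt}(c_x)$ equals $[u_1=1 \text{ and } U_1\not\subseteq x^\perp]$ when $k\ge 2$. Hence $\mathbf 1\in\C^\perp$ iff $u_1\ge 2$: if $u_1=1$, any $x$ not orthogonal to the single vector of $U_1$ gives an odd codeword.

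\emph{Step 2 ($\mathbf 1\in\C$).} $\mathbf 1\in\C$ iff some $x\ne\mathbf 0$ has $x\cdot v=1$ for every surviving column $v$. Equivalently, every nonzero vector of the hyperplane $x^\perp$ (which occurs $h$ times in $S_k(h)$) is deleted exactly $h$ times. Counting with multiplicity gives
\[
h(2^{k-1}-1)=\sum_{i=1}^t \bigl|(U_i\cap x^\perp)\setminus\{\mathbf 0\}\bigr|\le \sum_{i=1}^t(2^{u_i}-1).
\]
Since the $u_i$ are distinct and $<k$, the right side is at most $\sum_{u=1}^{k-1}(2^u-1)=2^k-k-1<2(2^{k-1}-1)$ for $k\ge2$. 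This forces $h=1$. Then no column is deleted twice, and the nonzero parts of the subspaces $U_i\cap x^\perp$ (of dimension $u_i$ or $u_i-1$) partition $x^\perp\setminus\{\mathbf 0\}$, a projective space of dimension $k-2$.

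\emph{Step 3 (the partition argument; the main obstacle).} I must show that such a partition by pieces of pairwise distinct dimensions, or of dimension $0$ from $u_i=1$, forces one piece to be all of $x^\perp$. This requires $U_t=x^\perp$ and $u_t=k-1$, and then every other $U_i$ contributes nothing, i.e.\ $u_i=1$.

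First, if the largest piece has dimension $\le k-2$ inside the $(k-1)$-space $x^\perp$, the pieces have distinct dimensions: dimension $k-2$ can arise only from $u_t=k-1$, and all other pieces have dimension $\le k-3$. The total count is then at most $(2^{k-2}-1)+\sum_{u\le k-3}(2^u-1)<2^{k-1}-1$. That is too small, so some piece is the whole of $x^\perp$.

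The delicate bookkeeping is the case where a $(u_i-1)$-dimensional piece and a $u_j$-dimensional piece have equal dimension. I would handle it by using the trivial-intersection property: two subspaces of dimensions $a,b$ in a $(k-1)$-space with $a+b>k-1$ must meet. So a piece of dimension $k-2$ excludes every other positive-dimensional piece, and the counting above goes through.

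Combining this with Step 1 ($u_1\ge 2$ rules out the extra $u_i=1$) gives $t=1$ and $u_1=k-1$.

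\emph{Step 4 (converse).} For $h=1$, $t=1$, $U_1=x_0^{\perp}$, the codeword $c_{x_0}$ is $\mathbf 1$ by Step 2, and Step 1 gives even weights when $u_1=k-1\ge2$. I will check the edge case $k=2$ separately, since there $u_1=1$.
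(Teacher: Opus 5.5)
Steps 1, 2 and 4 are sound, but Step 3 does not work as written. Two of its claims are false.

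First, a piece of dimension $k-2$ need not come from $u_t=k-1$: any $U_i$ with $u_i=k-2$ and $U_i\subseteq x^\perp$ also gives one. Second, and more seriously, your fix ``a piece of dimension $k-2$ excludes every other positive-dimensional piece'' does not follow from the criterion you quote. With $a=k-2$, the condition $a+b>k-1$ only forces $b\ge 2$, and a hyperplane of $x^\perp$ together with a point outside it meet trivially.

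This configuration actually occurs. Take $h=1$, $u_1=1$, $u_2=k-1$, $U_2\neq x^\perp$, and $U_1=\langle v\rangle$ with $v\in x^\perp\setminus U_2$. For $k=3$ it yields two one-dimensional pieces covering $2$ vectors. That exceeds your displayed bound $(2^{k-2}-1)+\sum_{u\le k-3}(2^u-1)=1$. So the bookkeeping you flagged as delicate is exactly where the argument breaks.

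The conclusion is still true. The cleanest repair is to use trivial intersection in $\mathbb F_2^k$ instead of inside $x^\perp$. Since $h=1$, the $U_i$ pairwise meet only in $\mathbf 0$, so $u_i+u_j\le k$ for $i\neq j$ (this is also the hypothesis $u_{t-1}+u_t\le k$).

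If $u_t\le k-2$, then using only that the $u_i$ are distinct, $\sum_i(2^{u_i}-1)\le\sum_{u=1}^{k-2}(2^u-1)=2^{k-1}-k<2^{k-1}-1$. This contradicts the partition.

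If $u_t=k-1$, every other $u_i$ equals $1$, so apart from $U_t\cap x^\perp$ at most one further point is covered. If $U_t\neq x^\perp$, the total is at most $2^{k-2}<2^{k-1}-1$ for $k\ge 3$ (for $k=2$ there is no other $U_i$ and the total is $0$). Hence $U_t=x^\perp$, and any remaining $U_1$ has $u_1=1$, which Step 1 excludes.

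Separately, the case $k=2$ is not an edge case to be checked but a counterexample to the ``if'' direction as stated. For $h=t=1$ and $u_1=1=k-1$, the matrix $[S_2\setminus U_1]$ consists of two distinct nonzero vectors of $\mathbb F_2^2$. So $\C=\mathbb F_2^2$ (length $g_2(2,1)=2$), $\C^\perp=\{\mathbf 0\}$, and $\mathbf 1\notin\mathrm{Hull}(\C)$, exactly as your Step 1 predicts since $u_1=1$.

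Your argument therefore proves the ``only if'' direction for all $k\ge 2$ (vacuously at $k=2$) and the ``if'' direction only for $k\ge 3$. The statement needs $k\ge 3$ for the converse.
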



The following theorem is obtained by combining Lemmas \ref{lem4} and \ref{lem6}, 

\begin{theorem}{\cite[Corollary 15, Theorem 16]{Li25}}\label{thm6}
     Assume that $k\ge 2$, $d = h2^{k - 1} - \sum_{i = 1}^{t} 2^{u_i - 1}$ with $1 \leq u_1 < u_2 < \cdots < u_t < k$, and $\sum_{i=1}^{{min}\{h+1,t\}}u_{t+1-i}\le hk $.
     If $(h,t,u_1)\neq (1,1,k-1)$, then there exists a binary $[g_2(k,d), k]$ LCP  with security parameters $d$. { As a direct consequence of this result, we obtain Theorem \ref{sun-lem14}.}
\end{theorem}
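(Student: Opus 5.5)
The plan is to combine the existence result of Lemma \ref{lem4} with the hull criterion of Lemma \ref{lem6}, and then feed the outcome into the Li--Shi--Ling construction: a binary $[n,k,d]$ code $\C$ with ${\bf 1}\notin \mathrm{Hull}(\C)$ yields a binary $[n,k]$ LCP with security parameter $d$.

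First I will fix $k$ and $d=h2^{k-1}-\sum_{i=1}^t 2^{u_i-1}$ with $1\le u_1<\cdots<u_t<k$ and $\sum_{i=1}^{\min\{h+1,t\}}u_{t+1-i}\le hk$. Under this hypothesis Lemma \ref{lem4} guarantees a $[g_2(k,d),k,d]$ Griesmer code of Solomon--Stiffler type. Concretely, Lemma \ref{lem2} gives $\C=\langle S_k(h)\setminus F\rangle$, where the disjoint subspaces $U_i$ are chosen inside $S_k(h)$ so that no column is punctured more than $h$ times. Since $(h,t,u_1)\ne(1,1,k-1)$, Lemma \ref{lem6} applies and gives ${\bf 1}\notin\mathrm{Hull}(\C)$.

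Next I will invoke the Li--Shi--Ling result to produce $\D$ with $(\C,\D)$ an LCP and $\min\{d(\C),d(\D^\perp)\}=d$. The underlying mechanism, which I would recall only briefly, is as follows. Because ${\bf 1}\notin \C\cap\C^\perp$, one can find a permutation $\sigma$ for which $\C\cap\sigma(\C)^\perp=\{0\}$. Taking $\D=\sigma(\C)^\perp$ then gives $\D^\perp=\sigma(\C)$, which has minimum distance exactly $d$. Since $\C$ meets the Griesmer bound, $d=d_L(g_2(k,d),k)$, so the LCP is even optimal.

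Finally, for the claimed consequence, I will check that the hypotheses of Theorem \ref{sun-lem14} fall inside this theorem. Take $k\ge5$ and $d\ge\lceil (k-1)/2\rceil 2^{k-1}$, and write $d$ in the form above. Here $h\ge\lceil (k-1)/2\rceil$, and the $u_i$ are distinct numbers below $k$, so the sum of the largest $\min\{h+1,t\}$ of them is at most $hk$. I expect this to be the main obstacle: the condition has to be verified carefully from the binary expansion of $h2^{k-1}-d$, which is the same bookkeeping as in \cite{BM}. The excluded case $h=1$ forces $d\le 2^{k-1}$, and this is incompatible with $d\ge\lceil(k-1)/2\rceil2^{k-1}$ once $k\ge5$, since then $\lceil (k-1)/2\rceil\ge2$. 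The exception $(1,1,k-1)$ therefore never arises, and Theorem \ref{sun-lem14} follows.
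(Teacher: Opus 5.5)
Your proposal follows the paper's route exactly: Lemma~\ref{lem4} supplies the $[g_2(k,d),k,d]$ Griesmer code, and Lemma~\ref{lem6} gives $\mathbf{1}\notin\mathrm{Hull}(\C)$ away from $(h,t,u_1)=(1,1,k-1)$; the Li--Shi--Ling result then turns this into the LCP, and Theorem~\ref{sun-lem14} follows because $h\ge\lceil (k-1)/2\rceil\ge 2$. Two minor points: your ``concretely, Lemma~\ref{lem2}'' aside is not available under the stated hypothesis, since Lemma~\ref{lem2} needs the full sum $\sum_{i=1}^{t}u_i\le hk$ (which fails, e.g., for $k=6$, $h=2$, $(u_1,\dots,u_5)=(1,\dots,5)$), so rely on Lemma~\ref{lem4} alone there; and the step you flag as the main obstacle is a two-line check, because for $t\ge h+1$ the largest $h+1$ of the $u_i$ sum to at most $(h+1)k-\binom{h+2}{2}$, which is $\le hk$ since $k\le 2h+1\le\binom{h+2}{2}$ (and for $t\le h$ the sum is at most $h(k-1)$).
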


{ Observe that Theorem \ref{thm6} gives an existence result because of Lemma \ref{lem6}. The problem of finding a specific construction method to generate LCPs from a given linear code remains difficult. In the next section, we will combine Güneri's idea  with Belov codes to construct a broad new class of optimal binary LCPs.}

{
\section{Optimal LCPs via  Belov codes  based on Güneri's construction}

In this section, we will consider presenting more optimal LCPs by using Belov codes in a modified version of Güneri's construction.


Now we want to delete  $F_1 = W \setminus A$ to construct Belov code, where $ \dim W = \theta $ and  $|A| \in \{\theta, \theta + 1\}$, $|F_1| = |W| - |A| = 2^\theta - 1 - \theta $, or $|F_1| = |W| - |A| = 2^\theta - 1 - (\theta +1)$. 

\begin{lemma}\label{lem8}
Based on the Güneri's construction,  if $h=1,$ then it is not possible to perform the Belov deletion  $F_1 = W \setminus A$ in $[S_k \backslash  U]$, where $U$ is a subspace of $\Bbb F_2^k$ with dimension $k-1$.
\end{lemma}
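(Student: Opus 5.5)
The plan is to turn the deletion into a counting and linear-algebra incompatibility inside the hyperplane $U$. With $h=1$ the matrix $[S_k\setminus U]$ has as columns exactly the nonzero vectors of $\mathbb F_2^k$ that lie outside $U$. Each column occurs once, and $U$ is a hyperplane (for Güneri's choice \eqref{eq2}, these are the vectors of odd weight). A Belov deletion of $F_1=W\setminus A$ requires every vector of $F_1$ to be an available column, so it needs $(W\setminus A)\cap U=\emptyset$. Here $\dim W=\theta\ge 3$, $A\subset W$, $|A|\in\{\theta,\theta+1\}$, and any $\le\theta$ points of $A$ are linearly independent. Equivalently, every nonzero vector of $W\cap U$ must lie in $A$. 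I will show that this is impossible.

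First I bound the intersection $W\cap U$. Since $U$ has codimension $1$, $\dim(W\cap U)\ge \theta-1\ge 2$. So $W\cap U$ contains a subspace $V$ of dimension exactly $\theta-1$, and hence at least $2^{\theta-1}-1$ nonzero vectors, all of which would have to belong to $A$. The case $W\subseteq U$ is included here, since then $\dim(W\cap U)=\theta$.

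Next I derive the contradiction from the independence condition on $A$. Because $\dim V\ge 2$, pick two distinct nonzero vectors $a,b\in V$. Then $a+b$ is also a nonzero vector of $V$, so $a,b,a+b\in A$. These are three points of $A$ that are linearly dependent, and since $3\le\theta$ this violates the requirement that any set of at most $\theta$ points of $A$ is linearly independent. As a cross-check by counting alone: for $\theta\ge 4$ we have $2^{\theta-1}-1\ge 7>\theta+1\ge |A|$, so the nonzero vectors of $V$ cannot even fit inside $A$. The dependence argument handles every $\theta\ge3$ uniformly, including $\theta=3$.

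Finally, I conclude that $W\setminus A$ always meets the deleted hyperplane $U$, so its columns are not present in $[S_k\setminus U]$ and the Belov deletion cannot be performed. The point I expect to need the most care is justifying that, in Güneri's setting with $h=1$, no column can be deleted twice, because there is only one copy of $S_k$. Equivalently, the columns removed by $F_1$ must avoid $U$ entirely, and this is exactly what makes the intersection argument apply.
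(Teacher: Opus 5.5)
Your proof is correct. It rests on the same observation as the paper's: with one copy of $S_k$, the set $F_1=W\setminus A$ must avoid the deleted hyperplane $U$, so every nonzero vector of $W\cap U$ is forced into $A$. These forced vectors contain a dependent triple.

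The paper reaches this by case analysis. It first takes $U=U_{k-3}$ (the even-weight vectors) without loss of generality. It then fixes three independent vectors $w_1,w_2,w_3$ of $W$ and splits into four cases according to how many of them have even weight. In each case it exhibits either an even-weight vector of $F_1$, or a dependent triple among the even-weight vectors that $A$ must contain, such as $\{w_1,w_2,w_1+w_2\}$ or $\{w_1+w_2,w_1+w_3,w_2+w_3\}$.

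Your codimension bound $\dim(W\cap U)\ge\theta-1\ge 2$ collapses all four cases into one step. It needs no normalization of $U$, and it applies verbatim to every $\theta\ge 3$ and every admissible $A$. By contrast, the paper's case split identifies the even-weight vectors of $W$ as specific combinations of $w_1,w_2,w_3$, which is only valid when $\theta=3$. In its first case it also checks only the particular choice $A=\{w_1,w_2,w_3\}$ rather than an arbitrary $A$. What the paper's version offers in exchange is only concreteness about which vectors get forced into $A$.

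The point you flag as needing the most care is in fact immediate. For $h=1$ the columns of $[S_k\setminus U]$ are exactly the nonzero vectors outside $U$, each appearing once. So a deletable $F_1$ must satisfy $F_1\cap U=\emptyset$, which is the hypothesis $F\cap F_1=\emptyset$ of the Belov construction.
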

\begin{proof} Without loss of generality, we can take $U=U_{k-3}$ in Eq. \eqref{eq2}.
Note that $\theta\ge 3$ from Lemma \ref{lem3}.
Let $\{w_1,w_2,w_3\}$ be a set of linearly independent vectors of $W$. 

(i) When all basis vectors are even-weight vectors, let $A =\{w_1,w_2,w_3\}$. Then $F_1 = \{ w_1+w_2,\ w_1+w_3,\ w_2+w_3,\ w_1+w_2+w_3 \}$, it is clear that the sum of any two even-weight vectors is still an even-weight vector, so $F_1 \cap U_{k-3} \neq \emptyset$.

(ii) When $w_1$ and $w_2$ are even-weight vectors and $w_3$ is an odd-weight vector, put all even-weight vectors into $A$, i.e. $A = \{ w_1, w_2, w_1 + w_2 \}$. Although $(W \setminus A) \cap U_{k-3} = \emptyset$, the requirement that any set of $\leq \theta$ points in $A$ are linearly independent is obviously not satisfied.

(iii) When $w_1$ is an even-weight vector and $w_2, w_3$ are odd-weight vectors, put all even-weight vectors into $A$, i.e. $A = \{ w_1, w_2 + w_3, w_1 + w_2 + w_3 \}$. Although $(W \setminus A) \cap U_{k-3} = \emptyset$, the requirement that any set of $\leq \theta$ points in $A$ are linearly independent is still not satisfied.

(iv) When all basis vectors are odd-weight vectors, put all even-weight vectors into $A$, i.e. $A = \{w_1 + w_2, w_1 + w_3, w_2 + w_3 \}$, which is linearly dependent and does not satisfy the condition.

In summary, there does not exist a subspace of $W$ such that $(W\setminus A) \cap U_{k-3} = \emptyset$.
\end{proof}



Due to Lemma \ref{lem8}, we can not perform the Belov deletion when $h=1$ {and $\dim U=k-1$}. Next we always assume that $h\ge 2$ in the rest of this section.  Although we cannot perform the additional deletion of $F_1$ for the Belov code in the first simplex code, we can delete $F_1$ in subsequent simplex codes such that $F_1 \cap U = \emptyset$ for some subspace $U$ whose dimension is less than $k-1$. To satisfy the Griesmer bound, only one copy of $F_1$ can be deleted from the simplex code, and we choose to perform the deletion in $U_{k-4}$ with dimension $k-2$.

In the construction method of Güneri with $u_1=3$, we construct two classes of optimal binary LCP codes from Belov codes rather than SS codes in the following theorem.

\begin{theorem}\label{lem9}
Let \(k\ge 2\), \(h\ge 2\), and let
$1\le u_1<u_2<\cdots<u_t<k$
satisfy
\[
\sum_{i=1}^{\min\{h+1,t\}}u_{t+1-i}\le hk.
\]
Assume that the corresponding Solomon--Stiffler code has minimum distance
$
d_{\mathrm{SS}}=h2^{k-1}-\sum_{i=1}^{t}2^{u_i-1},
$
and that
$
d_{\mathrm{SS}}
\ge
\left\lceil\frac{k-1}{2}\right\rceil 2^{k-1}.
$
Consider the maximal anticode case, so that $u_1=3$. Let $\theta=3$, { and since $3=\theta\le u_1 < k$, we have $k \ge 4$.} Choose $A\subset W$ with $|A|\in\{3,4\}$, and put
$
\varepsilon=\theta+1-|A|\in\{0,1\}.
$
Assume that the corresponding Belov code has minimum distance by
\[
d_{\mathrm{Belov}}
=
d_{\mathrm{SS}}-\sum_{i=1}^{\theta-2}2^i-\varepsilon
=
d_{\mathrm{SS}}-2-\varepsilon.
\]
Then there exists an optimal binary LCP \((\mathcal C,\mathcal D)\) of length
$
g_2(k,d_{\mathrm{Belov}})
$
and dimension \(k\), with security parameter \(d_{\mathrm{Belov}}\), such that \(\mathcal C\) is a Belov code with parameters
$
[g_2(k,d_{\mathrm{Belov}}),\,k,\,d_{\mathrm{Belov}}].
$
\end{theorem}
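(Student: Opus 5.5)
The plan is to follow Güneri's three-step construction from Section III and insert one extra Belov deletion $F_1=W\setminus A$ that does not disturb the columns his argument needs.

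\textbf{Step 1: the Solomon--Stiffler part.} Start from $S_k(h)$ with $h\ge 2$ and delete the maximal anticode $F=\bigcup_{i=1}^{k-3}U_i$, where $\dim U_i=i+2$. The top two subspaces are $U_{k-3}$ from Eq.~\eqref{eq2} and $U_{k-4}=\mathrm{span}(\mathcal B\setminus\{(0,1,0,\ldots,0,1)\})$. The hypothesis $d_{\mathrm{SS}}\ge\lceil (k-1)/2\rceil 2^{k-1}$ guarantees enough copies of $S_k$, so these deletions can be spread over the copies with no column deleted more than $h$ times. The deletions are arranged so that the first two copies of $S_k$ still contain two copies of $I_k$ and the columns $x_1=(1,\ldots,1)^T$ and $x_2=(0,1,\ldots,1)^T$.

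\textbf{Step 2: the Belov deletion.} By Lemma~\ref{lem8}, $F_1$ cannot be taken out of the copy that already lost the $(k-1)$-dimensional subspace. So I place it in a later copy and choose a $3$-dimensional $W$ with $F_1\cap F=\emptyset$ inside that copy; this is the condition $F\cap F_1=\emptyset$ of Lemma~\ref{lem3}. Concretely, I try $W$ spanned by three vectors lying outside $U_{k-4}$ (for instance odd-weight vectors), with $A$ consisting of three independent points, or four points in general position when $\varepsilon=0$. The points of $W$ that fall into $U_{k-4}$ must all belong to $A$. The choice must also avoid the columns reserved in Step 2 of Güneri's construction: the identity columns and $x_1,x_2$.

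Lemma~\ref{lem3} with $\theta=3$ then shows that $G=[S_k(h)\setminus(F\cup F_1)]$ generates a Griesmer code with minimum distance $d_{\mathrm{Belov}}=d_{\mathrm{SS}}-2-\varepsilon$. Hence $d(\mathcal C)=d_L(n,k)$ with $n=g_2(k,d_{\mathrm{Belov}})$.

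\textbf{Step 3: the LCP.} Permute $G$ into systematic form $[I_k:M]$, using one of the surviving copies of $I_k$. Then take $\mathcal D$ generated by $[N:I_{n-k}]$, where $N^T$ is obtained from $M$ by permuting rows and columns. This gives $d(\mathcal D^\perp)=d(\mathcal C)$, and it remains to check that $I_k+MN$ is invertible. Following Güneri, I take $N^T$ to be $M$ with the columns of the second copy of $I_k$ and the columns $x_1,x_2$ rearranged by a suitable permutation. Then $MN=\sum_j m_j m_{\sigma(j)}^T$ splits into two parts:
\begin{itemize}
\item a controllable part coming from $I_k$, $x_1$, $x_2$;
\item the contribution $MM^T$ of all remaining columns.
\end{itemize}

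\textbf{Main obstacle.} Güneri's argument uses $MM^T=I_k$ for the SS code with $u_1\ge 3$. Here I must recompute this product after removing the $4$ or $5$ columns of $F_1$, i.e.\ compute $\sum_{v\in F_1}vv^T$ over $\mathbb F_2$. For $W$ of dimension $3$ with $A$ three basis vectors, $F_1=\{w_1+w_2,w_1+w_3,w_2+w_3,w_1+w_2+w_3\}$. Summing the four rank-one matrices and comparing with $\sum_{v\in W\setminus\{0\}}vv^T$, which is $0$ when $\dim W\ge 3$, gives
\[
\sum_{v\in F_1}vv^T=\sum_{a\in A}aa^T .
\]
When $|A|=4$ the same comparison applies with $A$ four points in general position.

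So the Gram matrix changes by $\sum_{a\in A}aa^T$. I will choose $A$ (for example, standard-like vectors $e_i+e_j$ supported on coordinates not involved in $x_1,x_2$) so that this perturbation can be cancelled by adjusting the permutation defining $N$. If necessary, I will modify which identity columns are paired, so that $I_k+MN$ becomes a unipotent triangular matrix and is therefore invertible. Checking this invertibility for both $\varepsilon=0$ and $\varepsilon=1$ is where I expect most of the work. Once it holds, $(\mathcal C,\mathcal D)$ is an LCP with security parameter $d_{\mathrm{Belov}}=d_L(n,k)$, hence optimal.
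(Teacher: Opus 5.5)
\textbf{There is a genuine gap: the step that carries the theorem is missing.}

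Your framework matches the paper's. You use Güneri's maximal anticode, put the Belov deletion $F_1$ in a copy other than the one carrying $U_{k-3}$, pass to the systematic form $[I_k:M']$, and take $N'^T$ from $M'$ by column and row permutations, so that $d(\mathcal D^\perp)=d(\mathcal C)$. Your identity $\sum_{v\in F_1}vv^T=\sum_{a\in A}aa^T$ is correct, and it is a cleaner route to the paper's $\Delta=F_1F_1^T$.

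However, you never fix $W$ and $A$, never exhibit $N'$, and never show that $I_k+M'N'$ is invertible. The sentence ``I will choose $A$ so that this perturbation can be cancelled'' states an intention, not an argument. Two of your concrete hints also cannot work:
\begin{itemize}
\item $x_1$ is the all-ones vector, so there are no coordinates ``not involved in $x_1,x_2$''.
\item Since $\dim(W\cap U_{k-4})\ge 3+(k-2)-k=1$, your own constraint forces $A$ to contain a vector of $U_{k-4}$. In the paper this vector is $e_{k-2}+e_{k-1}$.
\end{itemize}

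\textbf{What the paper supplies.} For explicit $W,A$ it computes $M'M'^T=I_k+\Delta$. It then swaps two surviving columns $x,y$ of $M'$; this adds $(x+y)(x+y)^T$ and is arranged so that rows and columns $1,2$ of $M'\sigma(M')^T$ vanish. Finally it shifts rows cyclically by two, $N'^T=P_1\sigma(M')$. Then $M'N'=M'\sigma(M')^TP_1^T$ moves each entry $(i,j)$ to $(i,j-2)$. The result is strictly lower triangular because every remaining nonzero entry satisfies $j\le i+1$.
\begin{itemize}
\item For $\varepsilon=1$ this requires $e_2\in A$, to kill the $(2,2)$ entry. The paper takes $A=\{e_k,\,e_{k-2}+e_{k-1},\,e_2\}$ and a swap with $x+y=e_1$, namely $x_3,x_4$.
\item For $\varepsilon=0$, your identity shows that the diagonal of $\Delta$ is $\sum_{a\in A}a=0$. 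So the $(2,2)$ entry of $I_k+\Delta$ equals $1$. A single swap with $x+y=e_1$, such as your reserved pair $x_1,x_2$, leaves that entry, and the shift sends it to $(2,k)$, above the diagonal. The paper instead takes $F_1=\mathrm{span}(e_2+e_3+e_4,\;e_1+e_2)\setminus\{0\}$ and needs $x+y=(1,1,1,1,0,\dots,0)^T$, realized by $x_5,x_6$.
\end{itemize}
You would also have to check that the swapped columns and one copy of $I_k$ survive the deletion of $F\cup F_1$. Without these explicit choices and checks, the proposal does not establish the existence of the LCP.
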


\begin{proof}
In this proof, matrix addition and multiplication are performed over $\mathbb{F}_2$.
We want to delete more columns   to use Belov construction, which means $u_1\ge 3$ from Lemma \ref{lem3}.  

Firstly, from our assumptions, we have $h\ge 2$ and let $G=[S_{k}(h) \setminus F]=[I_k:M].$ Let  $n$ be the length of the code and then \[
M = (m_1, m_2, \ldots, m_{n-k}), \quad \text{where } m_i \in \mathbb{F}_2^{k\times 1} \text{ for } 1 \le i \le n-k.
\] Then we have $
MM^T 
= I_k,
$  when $u_1\ge 3$ in Güneri's construction, meaning that any two distinct rows of $M$ have even intersection weight, while each row has odd weight. 
However,  if we remove $F_1$ in $M$, then the equality $MM^T = I_k$ no longer holds. 

Further deleting \(F_1 = W \setminus A\) yields the Belov code. For the resulting matrix $M'$, we have
\[
M'M'^T
= \sum_{i \notin F_1} m_i m_i^T
= \left( \sum_{i=1}^{n-k} m_i m_i^T \right) + \sum_{i \in F_1} m_i m_i^T.
\]
Let $\Delta = \sum_{i \in F_1} m_i m_i^T$. Then
\[
M'M'^T = MM^T + \Delta = I_k + \Delta.
\]

In the maximal anticode case, we have $u_1 = 3 \ge \theta \ge 3$, which forces $\theta = 3$. Thus, $\dim W = \theta = 3$ and $|W| = 2^\theta $.

(i). If   $\varepsilon = \theta + 1 - |A| = 1$ for $|A| = \theta $,  then we choose a basis for $W$ as
\[
w_1 = \begin{pmatrix}0\\0\\ \vdots \\0 \\0 \\ 1\end{pmatrix},\quad
w_2 = \begin{pmatrix}0\\ \vdots\\0 \\1 \\1 \\ 0\end{pmatrix},\quad
w_3 = \begin{pmatrix}0\\1\\0\\ \vdots\\0 \\0 \end{pmatrix}
\hspace{-0.4em}
\raisebox{-1em}{$\left.\rule{0pt}{2.5em}\right\}$}
\hspace{0.3em}
\raisebox{-1em}{$\text{all-zero } (k-3)\text{ rows.}$}
\]
In order to satisfy the condition that any set of at most $ \theta$ vectors in $A$ are linearly independent, we choose the basis of $W$ to form the set $A$. 
Accordingly, we obtain $F_1$ as shown below:
\[
F_1 =
\left(
\begin{matrix}
0 & 0 & 0 & 0 \\
0 & 1 & 1 & 1 \\
0 & 0 & 0 & 0 \\
\vdots & \vdots & \vdots & \vdots \\
0 & 0 & 0 & 0 \\
1 & 0 & 1 & 1 \\
1 & 0 & 1 & 1 \\
1 & 1 & 0 & 1
\end{matrix}
\right)
\hspace{-0.4em}
\raisebox{0.8em}{$\left.\rule{0pt}{2.5em}\right\}$}
\hspace{0.3em}
\raisebox{0.8em}{$\text{all-zero } (k-5)\text{ rows}.$}
\] Then we have  
\[
\Delta = \sum_{i \in F_1} m_i m_i^T = F_1F_1^T =
\left(
\begin{matrix}
0 & 0 &  \\ 
0 & 1 &  \\ 
 &  & 0 & \cdots &0\\
 & &\vdots & \ddots  & \vdots \\
 &  & 0 & \cdots & 0  \\
 &  & & &  & 1 & 1 & 0\\
 &  & & &  & 1 & 1 & 0\\
 &  & & &  & 0 & 0 & 1 
\end{matrix}
\right)
\hspace{-0.4em}
\raisebox{0.8em}{$\left.\rule{0pt}{2.5em}\right\}$}
\hspace{0.3em}
\raisebox{0.8em}{$\text{all-zero } (k-5)\text{ rows}$}.
\]

Then 
\[
M'M'^T =  I_k + \Delta =
\begin{pmatrix}
1 & 0 & \\
0 & 0 & \\
 & & \smash{\raisebox{3pt}{$I_{k-5}$}} & & & &\\
 & & & 0 & 1 & 0 \\
 & & & 1 & 0 & 0 \\
 & & & 0 & 0 & 0 \\
\end{pmatrix}. 
\]
Let \(x_3 = (1, 1, \dots, 1,0,1,1)^T\) and \(x_4 = (0, 1, \dots, 1,0,1,1)^T\) be vectors of length \(n\) over \(\mathbb{F}_2\). We denote by \(\sigma_1(M')\) the matrix resulting from swapping \(x_3\) and \(x_4\) in \(M'\) and by
\[
\varphi_1 = x_3x_3^T + x_4x_4^T + x_3x_4^T + x_4x_3^T= \begin{pmatrix}
1 & 0 & \cdots & 0 \\
0 & 0 & \cdots & 0 \\
\vdots & \vdots &\ddots &\vdots\\
0 & 0 & \cdots & 0 \\
\end{pmatrix}.
\] Therefore 
$M'\sigma_1(M')^T = M'M'^T + \varphi_1$, and let us cyclically shift row of $\sigma_1(M')$ to obtain the matrix $N'^T$ as follows:
\[
N'^T =P_1\sigma_1(M') =\begin{pmatrix}
\sigma_1(M')_3 \\
\sigma_1(M')_4 \\
\vdots \\
\sigma_1(M')_k \\
\sigma_1(M')_1 \\
\sigma_2(M')_2
\end{pmatrix}, 
~
\mbox{ where~} 
P_1 = \begin{pmatrix}
0 & 0 & 1 &\cdots&0\\
\vdots &\vdots &\vdots&\ddots&\vdots \\
0 & 0 & 0 &\cdots&1\\
1 & 0 & 0 &\cdots&0\\
0 & 1 & 0 &\cdots&0
\end{pmatrix}.
\]

It follows {that} $M'N' = M'(N'^T)^T = M'(P_1\sigma_1(M'))^T = M'\sigma_1(M')^TP_1^T$. Notice that the matrix $M'N'$  can be obtained by cyclically shifting the columns of $M'\sigma_1(M')^T$ as follows:
\[
M'N' = 
\begin{pmatrix}
0 \\
0 & 0 \\
1&\cdots & 0\\
\vdots &\ddots &\vdots&\ddots \\
0 & \cdots & 1 &0 &0\\
0& \cdots& 0& 0 &1&0&\\
0& \cdots& 0&1&0 &0& 0 \\
0& \cdots& 0&0 &0 &0 & 0 &0 \\
\end{pmatrix}.
\]
Since \(M'N'\) is a lower triangular matrix, all diagonal entries of \(I_k + M'N'\) are equal to one and hence nonzero, which implies that \(I_k + M'N'\) is invertible. {By the  Güneri's argument in Section III, our obtained code is optimal.}

(ii). If  $\varepsilon = \theta + 1 - |A| = 0$ for $|A| = \theta +1$,  then we choose a basis for $W$ as
\[
w_1 = \begin{pmatrix}0\\0\\0\\0\\0\\ \vdots \\0 \\ 1\end{pmatrix},\quad
w_2 = \begin{pmatrix}0\\1\\1\\1\\0\\ \vdots\\0 \\ 0\end{pmatrix},\quad
w_3 = \begin{pmatrix}1\\1\\0\\0\\0\\ \vdots\\0 \\0 
\end{pmatrix}
\hspace{-0.4em}
\raisebox{-2.7em}{$\left.\rule{0pt}{2.5em}\right\}$}
\hspace{0.3em}
\raisebox{-2.7em}{$\text{all-zero } (k-5)\text{ rows.}$}
\]
Then $F_1$ can be shown below:
\[
F_1 = 
\begin{pmatrix}
0 & 1 & 1 \\
1 & 0 & 1 \\
1 & 1 & 0 \\
1 & 1 & 0 \\
0 & 0 & 0 \\
\vdots & \vdots & \vdots  \\
0 & 0 & 0
\end{pmatrix}
\hspace{-0.4em}
\raisebox{-3.5em}{$\left.\rule{0pt}{2.5em}\right\}$}
\hspace{0.3em}
\raisebox{-3.5em}{$\text{all-zero } (k-4)\text{ rows.}$}
\] Hence we have 
\[
\Delta = F_1F_1^\mathrm{T} =
\begin{pmatrix}
0 & 1 & 1 & 1 & 0 & \cdots & 0\\ 
1 & 0 & 1 & 1 & 0 & \cdots & 0\\ 
1 & 1 & 0 & 0 & 0 & \cdots & 0\\
1 & 1 & 0 & 0 & 0 & \cdots & 0 \\
0 & 0 & 0 & 0 & 0 & \cdots & 0 \\
\vdots & \vdots & \vdots& \vdots& \vdots& \ddots  & \vdots \\
0 & 0 & 0 & 0 & 0 & \cdots & 0 
\end{pmatrix}
\hspace{-0.4em}
\raisebox{-3.5em}{$\left.\rule{0pt}{2.5em}\right\}$}
\hspace{0.3em}
\raisebox{-3.5em}{$\text{all-zero } (k-4)\text{ rows,}$}
\]
and
\[
M'M'^T =  MM^T + \Delta =
\begin{pmatrix}
1 & 1 & 1 & 1 &  \\ 
1 & 1 & 1 & 1 &   \\ 
1 & 1 & 1 & 0 &   \\
1 & 1 & 0 & 1 &   \\
  &   &   &   & I_{k-4}  \\
\end{pmatrix}.
\]
Let \(x_5 = (0, 0, 1, 1, 0, \dots,0, 1)^T\) and \(x_6 = (1, 1,0, 0,0, \dots, 0, 1)^T\).  We denote by \(\sigma_2(M')\) the matrix resulting from swapping \(x_5\) and \(x_6\) in \(M'\) and  
\[
\varphi_2 = x_5x_5^T + x_6x_6^T +x_5x_6^T + x_6x_5^T= \begin{pmatrix}
1 & 1 & 1 & 1 & 0 & \cdots & 0\\ 
1 & 1 & 1 & 1 & 0 & \cdots & 0\\ 
1 & 1 & 1 & 1 & 0 & \cdots & 0\\
1 & 1 & 1 & 1 & 0 & \cdots & 0 \\
0 & 0 & 0 & 0 & 0 & \cdots & 0 \\
\vdots & \vdots & \vdots& \vdots& \vdots& \ddots  & \vdots \\
0 & 0 & 0 & 0 & 0 & \cdots & 0 
\end{pmatrix}.
\] Then
\[M'\sigma_2(M')^T = M'M'^T + \varphi_2 =
\begin{pmatrix}
0 & 0 & 0 & 0 & \\ 
0 & 0 & 0 & 0 & \\ 
0 & 0 & 0 & 1 &  \\
0 & 0 & 1 & 0 &   \\
 &  &  & & I_{k-4}  \\
\end{pmatrix}.
\]
Let us cyclically shift row of $\sigma_2(M')$ to obtain the matrix $N'^T$ as follows:
\[
N'^T=P_1\sigma_2(M')= \begin{pmatrix}
\sigma_2(M')_3 \\
\sigma_2(M')_4 \\
\vdots \\
\sigma_2(M')_k \\
\sigma_2(M')_1 \\
\sigma_2(M')_2
\end{pmatrix},
~
\mbox{ where~} 
P_1 = \begin{pmatrix}
0 & 0 & 1 &\cdots&0\\
\vdots &\vdots &\vdots&\ddots&\vdots \\
0 & 0 & 0 &\cdots&1\\
1 & 0 & 0 &\cdots&0\\
0 & 1 & 0 &\cdots&0
\end{pmatrix}.
\]
Namely, the matrix $M'N'$ is obtained by cyclically permuting the columns of $M'\sigma_2(M')^\mathrm{T}$:
\[
M'N' = 
\begin{pmatrix}
0 \\
0 & 0 \\
0&1 & 0\\
1& 0& 0&0 \\
0& 0& 1& \cdots&0 \\
\vdots &\vdots &\vdots& \ddots&\vdots& \ddots\\
0 & 0 & 0&\cdots&1 &0 &0 \\
\end{pmatrix}.
\]
It is easy to check that \(I_k + M'N'\) is invertible.   By the  Güneri's argument in Section III, the resulting code is optimal. This completes the proof.
\end{proof}



 {\rm




{ It is important comparing $
[g_2(k,d_{\mathrm{SS}}),\,k,\,d_{\mathrm{SS}}]
$ by Güneri's construction of maximal anticode cases with $
[g_2(k,d_{\mathrm{Belov}}),\,k,\,d_{\mathrm{Belov}}]$ by our construction when $k=4,5,6,7,8$.} The main purpose of Table~\ref{table:parameters-comparison} is to compare the minimum distance bounds of Belov construction with Güneri's previous Solomon-Stiffler (SS) construction across different dimensions $k$.
In Güneri's construction, when restricted to $u_1 \geq 3$, the minimum distance $d$ is required to be large (e.g., $d \geq 36$ for $k=5$), which severely limits the range of available parameters. In contrast, Belov construction successfully lowers the minimum distance lower bound under the same condition of $u_1 \geq 3$ (e.g., reduced to $\geq 33$ for $k=5$). 
This Table~\ref{table:parameters-comparison} demonstrates that Belov codes not only cover the parameter regime of SS codes but also broaden the existence range of optimal LCPs, constructing LCPs with smaller distances that still achieve the Griesmer bound. It verifies the core contribution mentioned in our conclusion: relaxing the restriction of $d \geq \lceil (k-1)/2 \rceil 2^{k-1}$.}

\begin{table}[!htbp] 
\begin{center} 
\caption{Minimum Distance Lower Bounds of LCPs from  Solomon-Stiffler and Belov Codes}
\label{table:parameters-comparison}
\begin{tabu}{|c|c|c|c|c|}
\hline
\multirow{2}{*}{} & \multicolumn{3}{c|}{Solomon-Stiffler Code \cite{G}} & Belov Code \\
\hline
&$u_1 \ge 3$ & $u_1=2$ & $u_1=1$ & $u_1 \ge 3, \theta = 3$  \\
\hline
$k=4$& & &  & $ \ge9$  \\
\hline
$k=5$&$ \ge 36$ & $\ge34$ & $ \ge 33$ & $ \ge33$  \\
\hline
$k=6$&$ \ge100 $ & $ \ge98$ & $ \ge 97$ & $ \ge97$  \\
\hline
$k=7$&$ \ge196 $ & $ \ge194$ & $\ge193$ & $ \ge193$  \\
\hline
$k=8$&$ \ge516 $ & $\ge514$ & $ \ge513$ & $ \ge513$  \\
\hline
\end{tabu}
\end{center} 
\end{table}

}

 \section{More new optimal binary LCPs}

This section is devoted to constructing further binary optimal LCPs by employing SS codes and Belov codes with different integers  $h$.

First, we present two following examples by Magma, in cases of $h=1$ and $h=2$. 

\begin{example}\label{belov} 
{\rm
Let $h=1$, $k = 6$, $d = 21 $, and $g_2(k, d) = 44$. We can delete a 4-dimensional  subspace $U$ wohse basis is  $\mathcal{B} \setminus  \{ \left( {0,1,0,\ldots ,0,1}\right) \},$ {where $\mathcal{B}$  was defined  in Equation (5).} Choose other parameters as follows: {Let} $|A| = \theta = 3 $. We have $\varepsilon = \theta + 1 - |A| = 1 $.  Choose a basis for $W$ as 
$w_1 = (0,0,0,0,0,1)^T, w_2=(0,0,0,1,1,0)^T, w_3=(0,1,0,0,0,0)^T$. The number of columns deleted  is $|F_1|=|W \setminus A| = 4$, {so that} $F_1=W \setminus A$ can be {written as}
\[
F_1= 
\begin{pmatrix}
0&0&0&0 \\
0&1&1&1\\
0&0&0&0\\
1&0&1&1 \\
1&0&1&1 \\
1&1&0&1
\end{pmatrix}.
\]
In this case, we can obtain a Belov code with parameters $[44, 6, 21]$. We have
\[
MM^T=\begin{pmatrix}
1&0&0&0&0&0\\ 
0&0&0&0&0&0\\
0&0&1&0&0&0\\
0&0&0&0&1&0\\ 
0&0&0&1&0&0\\ 
0&0&0&0&0&0 
\end{pmatrix},
\]
we swap columns $x_1 = (1, 1, 1,1, 1, 1)^T$ and $x_2 = (0, 1, 1,1, 1, 1)^T$ and denote by \(\sigma_3(M)\) the matrix resulting from swapping \(x_1\) and \(x_2\). The matrix $N^T$ is constructed as follows:
\[
N^T =P_1\sigma_3(M) =\begin{pmatrix}
\sigma_3(M)_3 \\
\sigma_3(M)_4 \\
\vdots \\
\sigma_3(M)_k \\
\sigma_3(M)_1 \\
\sigma_3(M)_2
\end{pmatrix}, 
~
\mbox{ where~} 
P_1 = \begin{pmatrix}
0 & 0 & 1 &\cdots&0\\
\vdots &\vdots &\vdots&\ddots&\vdots \\
0 & 0 & 0 &\cdots&1\\
1 & 0 & 0 &\cdots&0\\
0 & 1 & 0 &\cdots&0
\end{pmatrix}.
\]
This cyclic shift operation ensures that $\D^\perp$ maintains the same parameters as $\C$. The resulting matrix $MN$ is strictly lower triangular:
\[
MN= \begin{pmatrix} 
0\\ 
0&0\\
1&0&0\\
0&0&1&0\\ 
0&1&0&0&0\\ 
0&0&0&0&0&0 
\end{pmatrix}.
\]
{It is easy to see the matrix $I_k + MN$ must be invertible. Since both $\C$ and $\D^\perp$ have parameters $[44, 6, 21]$, the security parameter is
$
d_{LCP} = \max \{ \min\{d(\C), d(\D^\perp)\} \} = 21.
$}
}
\end{example}

\begin{example}\label{ex:LCPs-construction} 
{\rm
Let $h=2$, $k = 6$, $d = 37$, and $g_2(k, d) = 76$. Since $h=2$, we can delete a 4-dimensional subspace $U_1$ and 5-dimensional subspace $U_2$, and their bases are listed as follows: 
$\mathcal{B} \setminus \{ \left( {0,1,0,\ldots ,0,1}\right) \} $, where $\mathcal{B}$  was defined  in Equation (5).


Choose a basis for $W$ as 
$w_1 = (0,1,0,0,0,0)^T, w_2=(0,0,1,0,0,0)^T, w_3=(0,0,0,0,1,1)^T$. The number of columns deleted  is $|F_1|=|W \setminus A| = 4$, and so $F_1=W \setminus A$ can be shown below:
\[
F_1= 
\begin{pmatrix}
0&0&0&0 \\
1&1&0&1\\
1&0&1&1\\
0&0&0&0 \\
0&1&1&1 \\
0&1&1&1
\end{pmatrix}.
\]
In this case, we can obtain a Belov code with parameters $[76, 6, 37]$.  Then we have
\[
MM^T=\begin{pmatrix}
1&0&0&0&0&0\\ 
0&0&0&0&0&0\\
0&0&0&0&0&0\\
0&0&0&1&0&0\\ 
0&0&0&0&0&1\\ 
0&0&0&0&1&0
\end{pmatrix},
\]
we swap columns $x_1 = (1, 1, 1,1, 1, 1)^T$ and $x_2 = (0, 1, 1,1, 1, 1)^T$ and denote by \(\sigma_3(M)\) the matrix resulting from swapping \(x_1\) and \(x_2\). The matrix $N^T$ is constructed as follows:
\[
N^T = P_1\sigma_3(M) =  \begin{pmatrix}
\sigma_3(M)_3 \\
\sigma_3(M)_4 \\
\vdots \\
\sigma_3(M)_k \\
\sigma_3(M)_1  \\
\sigma_3(M)_2
\end{pmatrix},
~
\mbox{ where~} 
P_1 = \begin{pmatrix}
0 & 0 & 1 &\cdots&0\\
\vdots &\vdots &\vdots&\ddots&\vdots \\
0 & 0 & 0 &\cdots&1\\
1 & 0 & 0 &\cdots&0\\
0 & 1 & 0 &\cdots&0
\end{pmatrix}.
\]
The resulting matrix $MN$ is strictly lower triangular:
\[
MN= \begin{pmatrix} 
0\\ 
0&0\\
0&0&0\\
0&1&0&0\\ 
0&0&0&1&0\\ 
0&0&1&0&0&0 
\end{pmatrix}.
\]
It is easy to see the matrix $I_k + MN$ must be invertible. Since both $\C$ and $\D^\perp$ have parameters $[76, 6, 37]$, the security parameter is
$
d_{LCP} = \max \{ \min\{d(\C), d(\D^\perp)\} \} = 37.
$
}
\end{example}

Inspired by the above two examples, now we will try to give some constructions of LCPs by using SS codes and Belov codes according to the value of $h$. 


\begin{lemma}\label{lem12}
    Assume that $k \ge 2$ and $\lceil \frac{k-1}{2} \rceil2^{k-1}\ge d = h2^{k - 1} - \sum_{i = 1}^{t} 2^{u_i - 1}(d \neq 2^{k-2})$ with  $1 \leq u_1 < u_2 < \cdots < u_t < k$ and $\sum_{i=1}^{{min}\{h+1,t\}}u_{t+1-i}\le hk $.
    Let $F = \cup_{i = 1}^{t} U_i   $ with $\mbox{dim}(U_i)=u_i$ and $F_1 = W \setminus A$ for some subspace $W$ such that $F\cap F_1=\emptyset$.  Then we can find a generator matrix $G = [{S}_{k} (h)  \setminus  F]$ or $ [S_k(h) \setminus (F \cup F_1)]$ of \(\C\) with the following properties:
\begin{enumerate}[label=\roman*.]
    \item[(i)] If $h=1$, i.e. $d \le 2^{k-1}$, then $G$ has

    \begin{enumerate}[label=(\arabic*)]
        \item the $k \times k$ identity matrix $I_k$ and
        \item the following column vectors: \\
        $x_3 = (1,1,\cdots,1,0,1,1)^T$, 
        $x_4 = (0,1,\cdots,1,0,1,1)^T$, \\
        $x_5 = (0, 0, 1, 1, 0, \dots,0, 1)^T$, 
        $x_6 = (1, 1,0, 0,0, \dots, 0, 1)^T$.
    
    \end{enumerate}
    \item[(ii)] If $2 \le h \le \lceil \frac{k-1}{2} \rceil$, i.e. $ 2^{k-1} < d \le \lceil \frac{k-1}{2} \rceil 2^{k - 1}$, then $G$ has
    \begin{enumerate}[label=(\arabic*)]
        \item at least one copy of the  $k \times k$ identity matrix $I_k$ and
        \item at least one copy of the following column vectors:\\
        $x_1 = (1,1,\cdots,1,1)^T$,
        $x_2 = (0,1,\cdots,1,1)^T$,\\
        $x_3 = (1,1,\cdots,1,0,1,1)^T$, 
        $x_4 = (0,1,\cdots,1,0,1,1)^T$, \\
        $x_7 = (1,1,0,\cdots,0, 1)^T$, 
        $x_8 = (0,0,0,\cdots,0,1)^T$. 
    \end{enumerate}
    
\end{enumerate}
\end{lemma}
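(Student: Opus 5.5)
The plan is to prove the lemma by explicitly choosing the deleted subspaces $U_i$, and when needed the Belov set $F_1=W\setminus A$, so that none of them contains the listed columns. Since $G=[S_k(h)\setminus F]$ or $[S_k(h)\setminus(F\cup F_1)]$ is obtained from $h$ copies of $S_k$, and each nonzero vector of $\mathbb F_2^k$ occurs exactly $h$ times, it suffices to show the following. For every required vector $x$, the number of deleted sets that contain $x$ is strictly less than $h$. Here the deleted sets are the $U_i$'s and $F_1$, after they have been distributed among the copies of $S_k$ as in the discussion before Lemma \ref{lem2}. The identity matrix $I_k$ amounts to requiring the same for each standard basis vector $e_j$.

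\textbf{Case (i), $h=1$.} Now $F$ must avoid every required column, and $U_i\cap U_j=\{\mathbf 0\}$. Each $u_i\le k-1$, and $d\ne 2^{k-2}$ excludes the degenerate single-hyperplane case $t=1,u_1=k-1$ in Lemma \ref{lem4}; otherwise a hyperplane could be forced through our columns. I would take all $U_i$ inside the complement of the finite set
\[
X=\{e_1,\dots,e_k,x_3,x_4,x_5,x_6\}.
\]
Concretely, I would pick one hyperplane $H$, given by a linear functional $\varphi$ with $\varphi(x)=1$ for all $x\in X$. Such a $\varphi$ would need $\varphi(e_j)=1$ for all $j$, so $\varphi$ is the all-ones functional. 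This satisfies the condition only if $x_3,\dots,x_6$ all have odd weight, which depends on the parity of $k$. If the parity fails, I would instead use a functional taking value $1$ on $X$ except on a vector not needed simultaneously. Since $\sum u_i\le k$ forces $\sum 2^{u_i}$ to be small, the nested disjoint subspaces $U_1,\dots,U_t$ of $H$ can then be placed inside $H$. The step I expect to be hardest is exactly this choice of a single functional vanishing on no element of $X$ for every parity of $k$. If no such functional exists, I would try two functionals and split the $U_i$'s between their kernels, using $\sum u_i\le k$ to keep them disjoint.

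\textbf{Case (ii), $2\le h\le\lceil (k-1)/2\rceil$.} There are $h\ge2$ copies, so a vector survives as long as it is punctured at most $h-1$ times. I would imitate Güneri's maximal-anticode arrangement. Place the largest subspace $U_t$ as the even-weight hyperplane of Eq. \eqref{eq2}, or a subspace of it; this avoids $x_1$ and $e_j$ whenever those have odd weight. Then place the remaining subspaces in later copies, spanned by subsets of $\mathcal B$. The condition $\sum_{i=1}^{\min\{h+1,t\}}u_{t+1-i}\le hk$ guarantees that at most $h$ subspaces share any copy. I would then check vector by vector that each of $x_1,x_2,x_3,x_4,x_7,x_8,e_j$ lies in at most $h-1$ of the chosen $U_i$ and $F_1$. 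This is a direct parity and support check using the bases $\mathcal B\setminus\{\cdot\}$; the pattern is the one that succeeded in Theorem \ref{lem9} and Examples \ref{belov}, \ref{ex:LCPs-construction}. Finally, $F_1$ is placed with $F\cap F_1=\emptyset$ inside a copy not used by $U_t$, choosing $W$ as in Theorem \ref{lem9} so that $W\setminus A$ misses $X$. The main obstacle is the bookkeeping when $h=2$, where each required vector may be deleted at most once. For that I would first try placing $U_t$ in the first copy and all the other subspaces and $F_1$ in the second copy, and then verify directly that no vector of $X$ lies in both.
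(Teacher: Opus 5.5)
Your case (i) rests on a step that fails for every $k$. Requiring a hyperplane $\ker\varphi$ to miss all $e_j$ does force $\varphi$ to be the all-ones functional. The parity obstruction, however, does not depend on $k$: $\mathrm{wt}(x_3)=k-1$ and $\mathrm{wt}(x_4)=k-2$ have opposite parity, so the even-weight hyperplane always contains one of them. Thus no hyperplane is disjoint from $X=\{e_1,\dots,e_k,x_3,x_4,x_5,x_6\}$; your worry that a hyperplane might be ``forced through our columns'' is always realized.

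Your fallbacks do not repair this. Every vector of $X$ is needed simultaneously, so there is no vector ``not needed''. Distributing the $U_i$ between two kernels does not by itself help, since each kernel still meets $X$. The missing idea is to use the \emph{intersection} of two kernels. The paper deletes
$U=\langle\mathcal B\setminus\{(0,1,0,\dots,0,1)\}\rangle=\{a\in\mathbb F_2^k:\sum_i a_i=0,\ a_2=0\}$, with $\mathcal B$ the basis of $U_{k-3}$ in \eqref{eq2}; this $U$ has dimension $k-2$. Every $e_j$, $x_5$, $x_6$ has odd weight, and for $k\ge5$ the vectors $x_3,x_4$ have second coordinate $1$, so $U$ misses $X$.

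Note also a limit on generality. Pairwise trivially intersecting $U_i$ fit inside such a subspace only if $u_i+u_j\le k-2$. For $h=1$ the hypothesis gives only $u_{t-1}+u_t\le k$ (when $t\ge2$), not $\sum_i u_i\le k$ as you wrote. The paper therefore restricts to the maximal-anticode configuration $t=1$, $u_1=k-2$, and your argument would need the same restriction.

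You also never construct the Belov set in case (i). For $h=1$ it must sit in the same single copy as $F$, be disjoint from it, and avoid the needed columns; Lemma~\ref{lem8} is precisely the obstruction when $F$ contains a hyperplane. The paper handles this with the explicit bases of $W$ from the proof of Theorem~\ref{lem9}.

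In case (ii) your placement is the paper's: the even-weight hyperplane in one copy, the smaller subspaces and $F_1$ in another. But the check you defer is exactly the one above. The vectors $e_j$, $x_7$, $x_8$ and the odd-weight members of $\{x_1,\dots,x_4\}$ survive in the first copy. The vectors $x_1,\dots,x_4,x_7$ avoid $\langle\mathcal B\setminus\{(0,1,0,\dots,0,1)\}\rangle$ and all its subspaces because their second coordinate is $1$.

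Finally, ``choose $W$ as in Theorem~\ref{lem9} so that $W\setminus A$ misses $X$'' is not automatic. For $k=5$, the set $F_1$ from part (i) of that proof contains $(0,1,1,1,1)^T=x_2$. This vector also lies in the even-weight hyperplane, so with $h=2$ it disappears from both copies. The paper instead chooses new bases of $W$ in case (ii). In each subcase it guarantees only the columns used afterwards in Theorem~\ref{sun-newthm1}: $x_3,x_4$ (or $x_1,x_2$ when $k=6$) for $\varepsilon=1$, and $x_7,x_8$ for $\varepsilon=0$.
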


\begin{proof} We first construct the Solomon–Stiffler code, and then we obtain the Belov code by further deleting $F_1$ from the SS codes.

    \begin{enumerate}[label=\roman*.]
        \item[(i).] $h=1$.          
        \begin{enumerate}[label=(\arabic*)]
            \item (SS code)  By Lemma \ref{lem8}, $u_t\neq k-1$. To keep the identity $I_k$ in $G$, we have to delete the subspaces of $U_{k-3}$ in Eq. \eqref{eq2}. Consider the maximal anticode case, we can delete  the space spanned by the set $\mathcal{B} \setminus \{ \left( {0,1,0,\ldots ,0,1}\right) \}$. This deletion of maximal subspace ensures the existence of $I_k$, $x_3$, $x_4$, $x_5$ and $x_6$.
            \item (Belov code) Choose $\theta = 3$, $\text{dim}(W)=\theta$.

            \begin{enumerate}[label=\arabic*)]
                \item $\varepsilon = \theta +1-|A| =1$ for $|A|=\theta$, then we choose a basis for $W$
                {
                 in the same manner as part (i) in the proof of Theorem 9, ensures the existence of $x_3$, $x_4$.
                }
                
                \item $\varepsilon = \theta +1-|A| =0$ for $|A|=\theta+1$, 
                 then we choose a basis for $W$ 
                 {
                 in the same manner as part (ii) in the proof of Theorem 9, ensures the existence of $x_5$, $x_6$.
                    }
            \end{enumerate}
        \end{enumerate}
        
        \item[(ii).] $2 \le h \le \lceil \frac{k-1}{2} \rceil$.
            \begin{enumerate}[label=(\arabic*)]
                \item (SS code)
                Let us consider the \lq\lq maximal anticode\rq\rq  case for $u_1 \geq 3$ by deleting subspaces $U_1,  \dots, U_{k-3} \subset {\mathbb{F}}_{2}^{k} $ with dimensions 
                \[
                u_1 = 3, u_2=4, \dots, u_{k-4}=k-2, u_{k-3}=k-1
                \]
                and $\sum_{i=1}^{{min}\{h+1,t\}}u_{t+1-i}\le hk $.
                
                We define the subspace $U_{k-3}$ of the largest dimension, and $\mathcal{B}$ is a basis for $U_{k-3}$. The next largest subspace $U_{k-4}$ of dimension $k-2$ as the space spanned by the set $\mathcal{B} \setminus \{ \left( {0,1,0,\ldots ,0,1}\right) \}$. When $h=2$, the maximal anticode is $u_1=k-2, u_2=k-1$, it is clear that $I_k \notin U_2$, and ensures $\{x_1,x_2,x_3, x_4, x_7,x_8\} \notin U_1$, and so on.

                \item (Belov code) To satisfy the Griesmer code, only one copy of $F_1$ can be deleted from the simplex code, and we choose to perform the deletion in $U$ of dimension $k-2$ for the maximal anticode case.
                \begin{enumerate}[label=\arabic*)]
                \item $\varepsilon = \theta +1-|A| =1$ for $|A|=\theta$, We choose a basis for $W$ as
                \[
                w_1 = \begin{pmatrix}0\\1 \\0 \\0 \\ \vdots \\ 0 \\0\end{pmatrix},\quad
                w_2 = \begin{pmatrix}0\\0 \\1 \\0 \\  \vdots \\ 0 \\0\end{pmatrix},\quad
                w_3 = \begin{pmatrix}0 \\0\\0\\ \vdots \\0\\ 1 \\1\end{pmatrix}
                \hspace{-0.4em}
                \raisebox{0em}{$\left.\rule{0pt}{2.5em}\right\}$}
                \hspace{0.3em}
                \raisebox{0em}{$\text{all-zero } (k - 4) \text{ rows}$}.
                \] Then 
                $F_1$ can be written as shown below:
                    \[
                    F_1 = 
                    \begin{pmatrix}
                    0 & 0 & 0 & 0\\
                    1 & 1 & 0 & 1 \\
                    1 & 0 & 1 & 1\\
                    0 & 0 & 0 & 0\\
                    \vdots & \vdots & \vdots  & \vdots\\
                    0 & 0 & 0 & 0 \\
                    0 & 1 & 1 & 1\\
                    0 & 1 & 1 & 1\\
                    \end{pmatrix}
                    \hspace{-0.4em}
                    \raisebox{-1em}{$\left.\rule{0pt}{2.5em}\right\}$}
                    \hspace{0.3em}
                    \raisebox{-1 em}{$\text{all-zero } (k-5)\text{ rows.}$}
                    \]
                    ensures the existence of $x_3$ , $x_4$ for $k \neq 6$. Because $x_4=(0,1,1,0,1,1)^T \in F_1$ when $k=6$ , ensures the existence of $x_1=(1,1,1,1,1,1)^T$ , $x_2=(0,1,1,1,1,1)^T$.  
                
                \item $\varepsilon = \theta +1-|A| =0$ for $|A|=\theta+1$, 
                We choose a basis for $W$ as
                \[
                w_1 = \begin{pmatrix}1 \\ 0 \\ 0\\ \vdots\\0\\ 0 \\0\end{pmatrix},\quad
                w_2 = \begin{pmatrix}0 \\ 1 \\ 0\\ \vdots\\0\\ 0 \\0\end{pmatrix},\quad
                w_3 = \begin{pmatrix}0 \\ 0 \\ 1\\0\\ \vdots \\ 0\\1\end{pmatrix}
                \hspace{-0.4em}
                \raisebox{-1.8em}{$\left.\rule{0pt}{2.5em}\right\}$}
                \hspace{0.3em}
                \raisebox{-1.8em}{$\text{all-zero } (k - 4) \text{ rows}$.}.
                \] Then 
                $F_1$ can be written as shown below:
                    \[
                    F_1 = 
                    \begin{pmatrix}
                    1 & 0 & 1  \\
                    0 & 1 & 1 \\
                    0 & 0 & 0 \\
                    \vdots & \vdots & \vdots \\
                    0 & 0 & 0  \\
                    0 & 0 & 0 \\
                    0 & 0 & 0 \\
                    \end{pmatrix}
                    \hspace{-0.4em}
                    \raisebox{0em}{$\left.\rule{0pt}{2.5em}\right\}$}
                    \hspace{0.3em}
                    \raisebox{0em}{$\text{all-zero } (k-4)\text{ rows.}$}
                    \]
                ensures the existence of $x_7$ , $x_8$.
            \end{enumerate}
            \end{enumerate}
    \end{enumerate}

    This completes the proof.
\end{proof}

\begin{remark}
{
    {\rm
In Lemma \ref{lem12} (i), if $h=1$, then $d\le 2^{k-1}$ which is not covered by Güneri's construction.
In Lemma \ref{lem12} (ii), if $2\le h\le \left\lceil \frac{k-1}{2}\right\rceil$, then $2^{k-1}<d\le \left\lceil \frac{k-1}{2}\right\rceil 2^{k-1}$, which is the range below Güneri's bound assumption for $d$.

}
}
\end{remark}

{
In Theorem \ref{lem9}, for $h\ge 2$ we provided an explicit optimal binary LCPs from Belov construction. By using  Lemma \ref{lem12}, we cover the remaining cases $h=1$ and $2\le h\le \lceil (k-1)/2\rceil$ in the following theorem. }


\begin{theorem}\label{sun-newthm1}
 Assume that $k\ge 2$, $d = h2^{k - 1} - \sum_{i = 1}^{t} 2^{u_i - 1}$ with $3\leq u_1 < u_2 < \cdots < u_t < k$, and $\sum_{i=1}^{{min}\{h+1,t\}}u_{t+1-i}\le hk $.
     \begin{enumerate}[label=\roman*.]
         \item[(i)] When $h=1$, i.e. $d \le 2^{k-1}$, if $(t,u_1)\neq (1,k-1)$, then there exists  {an optimal} binary $[g_2(k,d), k]$ LCP  of Solomon–Stiffler or Belov type  with security parameters $d$.

         \item[(ii)] When $2 \le h \le \lceil \frac{k-1}{2} \rceil$, i.e. $ 2^{k-1} < d \le \lceil \frac{k-1}{2} \rceil 2^{k - 1}$, there exists   an optimal binary $[g_2(k,d), k]$ LCP of Solomon–Stiffler or Belov type  with security parameters $d$.
     \end{enumerate}
\end{theorem}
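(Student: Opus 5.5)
The plan is to follow the three-step scheme of Section III, using Lemma \ref{lem12} to supply the special columns. In each case I will take $\mathcal C$ generated by $G=[S_k(h)\setminus F]$ (Solomon--Stiffler) or $G=[S_k(h)\setminus(F\cup F_1)]$ (Belov). By Lemma \ref{lem2} or Lemma \ref{lem3}, respectively, $\mathcal C$ is a Griesmer code, so $d(\mathcal C)=d_L(n,k)$ with $n=g_2(k,d)$. Existence of such a code under the hypothesis $\sum_{i=1}^{\min\{h+1,t\}}u_{t+1-i}\le hk$ follows from Lemma \ref{lem4}. By Lemma \ref{lem12}, $G$ contains a copy of $I_k$ together with the listed columns. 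After permuting coordinates, $G=[I_k:M]$ with those special columns inside $M$.

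The core computation is $MM^T$ over $\mathbb F_2$. For the Solomon--Stiffler part with $u_1\ge 3$, every deleted block $U_i$ has $\sum_{v\in U_i\setminus\{0\}}vv^T=0$, because each entry counts $2^{u_i-2}$ or $2^{u_i-1}$ vectors, an even number. Each full copy of $S_k$ contributes $0$ by the same argument. Hence $MM^T$ is determined by $h$ and the removed $I_k$, which gives $MM^T=I_k$ up to the correction for $h$ copies. For the Belov part I will add $\Delta=F_1F_1^T$, computed explicitly for the bases of $W$ chosen in Lemma \ref{lem12}, as in the proof of Theorem \ref{lem9}.

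Next I choose the matrix $N$. I will swap a pair of special columns, $(x_1,x_2)$, $(x_3,x_4)$, $(x_5,x_6)$ or $(x_7,x_8)$ according to the case, which gives $\sigma(M)$. Then I set $N^T=P_1\sigma(M)$, with $P_1$ the cyclic row shift by two used in Theorem \ref{lem9}. Swapping columns of $M$ does not change the column multiset, and $P_1$ permutes rows of the $k\times(n-k)$ block. So $[I_k:N^T]$ generates a code equivalent to $\mathcal C$ after reindexing: permuting the rows of $N^T$ is matched by permuting the identity columns. Therefore $d(\mathcal D^\perp)=d(\mathcal C)=d$. The swap adds $\varphi=(x+y)(x+y)^T$ to $MM^T$. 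I pick the pair whose correction cancels the non-triangular part of $MM^T+\Delta$, leaving a matrix whose column-shifted version $M\sigma(M)^TP_1^T$ is strictly lower triangular. Then $I_k+MN$ has ones on the diagonal and is invertible, and by \cite[Proposition 13]{Choi23} $(\mathcal C,\mathcal D)$ is an LCP, optimal because $\mathcal C$ is Griesmer.

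I will organize the cases as in Lemma \ref{lem12}. In case (i), $h=1$, the condition $(t,u_1)\ne(1,k-1)$ is exactly what lets us avoid deleting the full $(k-1)$-space, which Lemma \ref{lem8} forbids; the Belov subcases $\varepsilon=1$ and $\varepsilon=0$ then use $(x_3,x_4)$ and $(x_5,x_6)$, as in Example \ref{belov}. In case (ii), $2\le h\le\lceil (k-1)/2\rceil$, the subcases use $(x_3,x_4)$, or $(x_1,x_2)$ when $k=6$, and $(x_7,x_8)$ for $\varepsilon=0$. The pure Solomon--Stiffler case uses $(x_1,x_2)$, as in Güneri's argument.

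The main obstacle is making the lower-triangularity verification uniform. $MM^T$ depends on which subspaces are actually deleted for a general tuple $(u_i)$, not only for the maximal anticode. I would first try to show that for $u_1\ge 3$ the deleted subspaces contribute nothing mod $2$, so that $MM^T$ depends only on $h$, $k$ and the deleted $I_k$, and on $\Delta$. I would then check the small boundary values $k=4,5,6$ directly, where the rows named in $x_3,\dots,x_8$ overlap.
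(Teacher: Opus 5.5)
You follow the paper's route: Lemma~\ref{lem12} supplies $I_k$ and a swap pair, the parity count $GG^T=\sum_c cc^T$ gives $M'M'^T=I_k+\Delta$, a column swap adds $(x+y)(x+y)^T$, a cyclic row shift makes $I_k+M'N'$ unitriangular, and $\D^\perp$ is equivalent to $\C$. However, your uniform choice $N^T=P_1\sigma(M)$ (shift by two) fails in the pure Solomon--Stiffler subcase. There $MM^T=I_k$, and swapping $(x_1,x_2)$ clears only the $(1,1)$ entry. If row $j$ of $N^T$ is row $\pi(j)$ of $\sigma(M)$, then $(MN)_{ij}=(M\sigma(M)^T)_{i,\pi(j)}$. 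So $MN$ is the permutation matrix of $\pi$ with its first row set to zero, and $I_k+MN$ is invertible if and only if $\pi$ is a single $k$-cycle. The shift $j\mapsto j+2$ is a $k$-cycle only for odd $k$. For even $k$, the vector with ones in exactly the even coordinates lies in the kernel of $I_k+MN$. For odd $k$ the matrix is invertible but has a $1$ at $(2,k)$, so your triangularity argument does not apply. In that subcase you need G\"uneri's one-step shift (the paper's $P_2$, which the paper imports by citing G\"uneri). Conversely, the two-step shift is genuinely needed in the Belov subcases where $M'\sigma(M')^T$ still contains a block $\left(\begin{smallmatrix}0&1\\1&0\end{smallmatrix}\right)$. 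These are both $\varepsilon=1$ cases and the $\varepsilon=0$ case built as in Theorem~\ref{lem9}(ii); there a one-step shift moves a $1$ onto the diagonal of $M'N'$. The shift must therefore be chosen case by case.

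Second, the obstacle you name at the end is not the real one. Your own parity count already shows $[S_k(h)\setminus F]\,[S_k(h)\setminus F]^T=0$ for every $h$ and every tuple with $u_1\ge 3$. So $MM^T=I_k$ exactly (there is no ``correction for $h$ copies''), and $M'M'^T=I_k+\Delta$ depends only on $F_1$. What does depend on the tuple is whether $I_k$ and the chosen swap pair survive the deletion of $F$, with $F_1$ placed disjointly from $F$ in some copy of $S_k$. Lemma~\ref{lem12} provides this only for the maximal-anticode configuration (for $h=1$, only $t=1$, $u_1=k-2$), and the paper's proof is confined to the same configurations. For example, with $h=1$, $t=2$ and $u_1+u_2=k$, the two subspaces cannot both lie in the even-weight hyperplane, so that placement is unavailable. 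Covering general tuples, as the statement claims, needs a separate argument placing the $U_i$ away from $e_1,\dots,e_k$, the swap pair and $F_1$. Finally, in the Belov subcases the code you build from the given tuple is $[g_2(k,d-2-\varepsilon),k,d-2-\varepsilon]$, not $[g_2(k,d),k,d]$ as you wrote. You should therefore make explicit which distance the conclusion refers to.
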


{
\begin{proof}
We begin by constructing a Belov code $\C$ with parameters $[g_2(k,d), k, d]$, which satisfies the Griesmer bound. 
 According to Lemma \ref{lem4}, such a code exists when we choose appropriate parameters $\theta$, $A$, and subspaces $U_i$ satisfying the required conditions.

Let $G = [S_k(h) \setminus (F \cup F_1)]$ be a generator matrix of $\C$, where $F = \bigcup_{i=1}^t U_i$ is the union of disjoint subspaces for SS codes and $F_1 = W \setminus A$ is the additional puncturing for the Belov construction. Through suitable column permutations, we can transform $G$ into systematic form $G = [I_k : M']$, where $M'$ is a $k \times (n-k)$ matrix.

The core of the proof lies in analyzing the structure of $M'M'^T$ and constructing an appropriate matrix $N'$ such that $I_k + M'N'$ is invertible. 

Notice that if $h\ge  \lceil \frac{k-1}{2} \rceil$, then the result follows from \cite[Theorem 5]{G23} for SS codes and Theorem 10 for Belov codes. Next we proceed the proof by the rest two cases.

\begin{enumerate}[label=\roman*.]
    \item[(i).] $h=1$.          
    \begin{enumerate}[label=(\arabic*)]
        \item (SS code) By Lemma \ref{lem12}, we can consider the maximal anticode case with $t=1$ and $u_1=k-2$. Then take $F_1=\emptyset $ and we have $M'M'^T=I_k$ from  \cite[Theorem 4]{G23}.  Then the result follows from \cite[Theorem 5]{G23}.
        
        \item (Belov code) By Lemma \ref{lem12}, choose  $\text{dim}(W)=\theta= 3$. Let $\Delta = \sum_{i \in F_1} m_i m_i^T = F_1F_1^T$. Then $M'M'^T = MM^T + \Delta = I_k + \Delta.$
        \begin{enumerate}[label=\arabic*)]
            \item If $\varepsilon = \theta +1-|A| =1$ for $|A|=\theta$, 
            { 
            then the argument follows identically to part (i) in the proof of Theorem 9 and we omit it here.
            We have 
            \[
            M'N' = 
            \begin{pmatrix}
            0 \\
            0 & 0 \\
            1&\cdots & 0\\
            \vdots &\ddots &\vdots&\ddots \\
            0 & \cdots & 1 &0 &0\\
            0& \cdots& 0& 0 &1&0&\\
            0& \cdots& 0&1&0 &0& 0 \\
            0& \cdots& 0&0 &0 &0 & 0 &0 \\
            \end{pmatrix}.
            \]
            Since \(M'N'\) is a lower triangular matrix, which implies that \(I_k + M'N'\) is invertible.
            }

            \item If $\varepsilon = \theta +1-|A| =0$ for $|A|=\theta+1$, 
            {
            then the argument follows identically to part (ii) in the proof of Theorem 9 and we omit it here.
            Then we have
            \[
            M'N' = 
            \begin{pmatrix}
            0 \\
            0 & 0 \\
            0&1 & 0\\
            1& 0& 0&0 \\
            0& 0& 1& \cdots&0 \\
            \vdots &\vdots &\vdots& \ddots&\vdots& \ddots\\
            0 & 0 & 0&\cdots&1 &0 &0 \\
            \end{pmatrix}.
            \]
            It is easy to check that \(I_k + M'N'\) is invertible.
            }
            
        \end{enumerate}
    \end{enumerate}
        
    \item[(ii).] $2 \le h \le \lceil \frac{k-1}{2} \rceil$.
        \begin{enumerate}[label=(\arabic*)]
            \item (SS code) By Lemma \ref{lem12}, we can consider the maximal anticode case with $u_1\ge3$. Then take $F_1=\emptyset $ and we have $M'M'^T=I_k$ from  \cite[Theorem 4]{G23}.  Then the result follows from \cite[Theorem 5]{G23}.
            
            \item (Belov code)            
            \begin{enumerate}[label=\arabic*)]
            \item If $\varepsilon = \theta +1-|A| =1$ for $|A|=\theta$, then we have 
            \[
            \Delta = F_1F_1^T =
            \left(
            \begin{matrix}
            0 & 0 &  0 \\ 
            0 & 1 &  0\\ 
            0 & 0 &  1\\
             &  & &  0 & \cdots &0\\
             & &&\vdots & \ddots  & \vdots \\
             &  && 0 & \cdots & 0  \\
             &  & & &  &  & 1 & 1\\
             &  & & &  &  & 1 & 1\\
            \end{matrix}
            \right)
            \hspace{-0.4em}
            \raisebox{-0.8em}{$\left.\rule{0pt}{2.5em}\right\}$}
            \hspace{0.3em}
            \raisebox{-0.8em}{$\text{all-zero } (k-5)\text{ rows}$}
            \]
            and 
            \[
            M'M'^T =  MM^T + \Delta =
            \begin{pmatrix}
            1 & 0 &0 \\
            0 & 0 &0 \\
            0 & 0 &0 \\
             & && \smash{\raisebox{3pt}{$I_{k-5}$}} & & \\
             & & && 0 & 1  \\
             & & && 1 & 0  \\
            \end{pmatrix}.
            \]
            Let $x_3 = (1,1,\cdots,1,0,1,1)^T$ and $x_4 = (0,1,\cdots,1,0,1,1)^T$ for $k \neq 6$. When $k=6$, the vector $x_4=(0,1,1,0,1,1)^T \in F_1$ will be Belov deleted and in this case we can choose  $x_1=(1,1,1,1,1,1)^T$ , $x_2=(0,1,1,1,1,1)^T$ in the generator matrix.  We denote by \(\sigma_3(M')\) the matrix resulting from swapping \(x_3\), \(x_4\) or $x_1$, $x_2$ in \(M'\). Let   
            \[
            \varphi_1 = x_3x_3^T + x_4x_4^T + x_3x_4^T + x_4x_3^T= \begin{pmatrix}
            1 & 0 & \cdots & 0 \\
            0 & 0 & \cdots & 0 \\
            \vdots & \vdots &\ddots &\vdots\\
            0 & 0 & \cdots & 0 \\
            \end{pmatrix},
            ~
            \mbox{ where~} 
            k \neq 6
            \]
            { and}
            \[
            \varphi_3 = x_1x_1^T + x_2x_2^T + x_1x_1^T + x_2x_1^T= \begin{pmatrix}
            1 & 0 & \cdots & 0 \\
            0 & 0 & \cdots & 0 \\
            \vdots & \vdots &\ddots &\vdots\\
            0 & 0 & \cdots & 0 \\
            \end{pmatrix},
             ~
            \mbox{ where~} 
            k = 6.
            \]
            { Let us} cyclically shift row of $\sigma_3(M')$ to obtain the matrix $N'^T$ {as follows:}
            \[
            N'^T =P_1\sigma_3(M') =\begin{pmatrix}
            \sigma_3(M')_3 \\
            \sigma_3(M')_4 \\
            \vdots \\
            \sigma_3(M')_k \\
            \sigma_3(M')_1 \\
            \sigma_3(M')_2
            \end{pmatrix}, 
            ~
            \mbox{ where~} 
            P_1 = \begin{pmatrix}
            0 & 0 & 1 &\cdots&0\\
            \vdots &\vdots &\vdots&\ddots&\vdots \\
            0 & 0 & 0 &\cdots&1\\
            1 & 0 & 0 &\cdots&0\\
            0 & 1 & 0 &\cdots&0
            \end{pmatrix},
            \]
            from which we have 
            \[
            M'N' = 
            \begin{pmatrix}
            0 \\
            0 & 0 \\
            0 & 0 &0 \\
            
            0&1 & \cdots & 0\\
            \vdots &\vdots&\ddots &\vdots&\ddots \\
            0 &0& \cdots & 1 &0 &0\\
            0&0&\cdots & 0& 0 &1&0&\\
            0& 0&\cdots& 0 &1&0 &0& 0 \\
            \end{pmatrix}.
            \]
            It is easy to check that \(I_k + M'N'\) is invertible.

            \item If $\varepsilon = \theta +1-|A| =0$ for $|A|=\theta+1$, then we have  
            \[
            \Delta = F_1F_1^T =
            \left(
            \begin{matrix}
            0 & 1 &  \\ 
            1 & 0 &  \\ 
             &  & 0 & \cdots &0\\
             & &\vdots & \ddots  & \vdots \\
             &  & 0 & \cdots & 0  \\
            \end{matrix}
            \right)
            \hspace{-0.4em}
            \raisebox{-1.8em}{$\left.\rule{0pt}{2.5em}\right\}$}
            \hspace{0.3em}
            \raisebox{-1.8em}{$\text{all-zero } (k-2)\text{ rows}$}
            \]
            and 
            \[
            M'M'^T =  MM^T + \Delta =
            \begin{pmatrix}
            1 & 1 & \\
            1 & 1 & \\
             & & I_{k-2}  \\
            \end{pmatrix}.
            \]
            Let $x_7 = (1,1,0,\cdots,0,1)^T$ and $x_8 = (0,0,0,\cdots,0,1)^T$. We denote by \(\sigma_4(M')\) the matrix resulting from swapping \(x_7\) and \(x_8\) in \(M'\). Let 
            \[
            \varphi_4 = x_7x_7^T + x_8x_8^T + x_7x_8^T + x_8x_7^T= \begin{pmatrix}
            1 & 1 &0 & \cdots & 0 \\
            1 & 1 & 0&\cdots & 0 \\
            0 & 0 & 0&\cdots & 0 \\
            \vdots & \vdots& \vdots &\ddots &\vdots\\
            0 & 0 &0& \cdots & 0 \\
            \end{pmatrix}.
            \]
            { Let us} cyclically shift row of $\sigma_4(M')$ to obtain the matrix $N'^T$ {as follows.}
            
            \[
            N'^T =P_2\sigma_4(M') =\begin{pmatrix}
            \sigma_4(M')_2 \\
            \sigma_4(M')_3 \\
            \vdots \\
            \sigma_4(M')_k \\
            \sigma_4(M')_1 
            \end{pmatrix}, 
            ~
            \mbox{ where~} 
            P_2 = \begin{pmatrix}
            0 & 1 & 0 &\cdots&0\\
            0 & 0 & 1 &\cdots&0\\
            \vdots &\vdots &\vdots&\ddots&\vdots \\
            0 & 0 & 0 &\cdots&1\\
            1 & 0 & 0 &\cdots&0
            \end{pmatrix}.
            \]
            Then we have 
            \[
            M'N' = 
            \begin{pmatrix}
            0 \\
            0 & 0 \\
            0&1& 0\\
            \vdots &\ddots &\ddots&\ddots \\
            0 & \cdots & 0 &1&0\\
            \end{pmatrix},
            \]
            which implies that \(I_k + M'N'\) is invertible.
        \end{enumerate}
    \end{enumerate}
\end{enumerate}

Now, let $\D$ be the code generated by $H = [N' : I_{n-k}]$. Since $I_k + M'N'$ is invertible, $(\C, \D)$ forms an LCP. Moreover, since $N'^T$ is obtained from $M$ through column permutations and row shifts, $\D^\perp$ has the same parameters as $\C$. { This shows that} both are optimal codes with minimum distance $d$. Therefore the security parameter $\min\{d(\C), d(\D^\perp)\} = d$ achieves the theoretical maximum $d_L(n,k)$. This proves the optimality of the LCP $(\C, \D)$, completing the proof.
\end{proof}

Combining Theorem \ref{sun-newthm1}, Theorem \ref{sun-lem14}, and Theorem \ref{lem9}, we have the following corollary:

\begin{corollary}  For every $k\ge 2$ and $d\ge 2^{k-1}$, then there exists  {an optimal} binary $[g_2(k,d), k]$ LCP  of Solomon–Stiffler or Belov type  with security parameters $d$. 
\end{corollary}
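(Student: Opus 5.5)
The corollary will follow by splitting the range $d\ge 2^{k-1}$ according to the replication number $h$ in the Griesmer representation of $d$, and then quoting the result that covers each piece. First, for given $k$ and $d\ge 2^{k-1}$, write $d$ in Solomon--Stiffler form $d=h2^{k-1}-\sum_{i=1}^{t}2^{u_i-1}$ with $1\le u_1<\cdots<u_t<k$, where $h=\lceil d/2^{k-1}\rceil$. This is the binary expansion of $h2^{k-1}-d$, which lies in $[0,2^{k-1})$. Lemma \ref{lem4} then tells us when a $[g_2(k,d),k,d]$ Griesmer code of SS or Belov type exists. That code is automatically optimal, so $d_L(g_2(k,d),k)=d$, and it suffices to produce an LCP $(\C,\D)$ with $d(\C)=d(\D^\perp)=d$.

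Next I split into three ranges.
\begin{enumerate}[label=(\alph*)]
\item If $d\ge\lceil (k-1)/2\rceil 2^{k-1}$, Theorem \ref{sun-lem14} (Güneri) applies directly.
\item If $2^{k-1}<d<\lceil (k-1)/2\rceil 2^{k-1}$, then $2\le h\le\lceil (k-1)/2\rceil$, and I invoke Theorem \ref{sun-newthm1}(ii). Theorem \ref{lem9} supplements this for the Belov-type distances $d_{\mathrm{SS}}-2-\varepsilon$.
\item If $d=2^{k-1}$, then $h=1$ and $t=0$, so the code is the simplex code, i.e.\ $n=2^k-1\equiv 0\pmod{2^k-1}$. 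This case is covered by \cite[Theorem 15]{Choi23}.
\end{enumerate}
The small dimensions $2\le k\le 4$, where the maximal-anticode constructions are not available, are handled by \cite[Theorems 16--18]{Choi23}, apart from their two listed exceptions. For those exceptions and any remaining gaps I would fall back on Theorem \ref{thm6}, since $(h,t,u_1)\neq(1,1,k-1)$ whenever $d\ge 2^{k-1}$.

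The main obstacle is matching hypotheses. Theorem \ref{sun-newthm1} assumes $u_1\ge 3$ and the condition $\sum_{i=1}^{\min\{h+1,t\}}u_{t+1-i}\le hk$, but an arbitrary $d$ in range (b) may have $u_1\in\{1,2\}$ or may violate that condition. My first attempt for such $d$ is to realize the code through the Belov alternative of Lemma \ref{lem4}, i.e.\ consecutive $u_i$ with $u_1\in\{1,2\}$. If that does not yield an explicit pair, I would accept the existence statement of Theorem \ref{thm6} for those residual cases, noting that its hypothesis is then satisfied.

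Finally, I would check that the three ranges together with the small-$k$ cases exhaust all $k\ge 2$ and $d\ge 2^{k-1}$. The security parameter equals $d_L(n,k)$ in every case, which gives optimality.
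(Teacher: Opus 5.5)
\textbf{There is a genuine gap, and it sits exactly at the step you flag.} When you fall back on Theorem~\ref{thm6}, you assert that ``its hypothesis is then satisfied.'' For $d\ge 2^{k-1}$, the only automatic part of that hypothesis is $(h,t,u_1)\neq(1,1,k-1)$. Theorem~\ref{thm6} also requires $\sum_{i=1}^{\min\{h+1,t\}}u_{t+1-i}\le hk$, and this can fail in your range (b).

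Take $h=2$, $k\ge 7$ and $(u_1,u_2,u_3)=(k-3,k-2,k-1)$, i.e.\ $d=2^{k-1}+2^{k-4}$; for instance $k=7$, $d=72$, $g_2(7,72)=145$. Then $u_1+u_2+u_3=3k-6>2k$, and $u_1=k-3\ge 4$. So both alternatives in Lemma~\ref{lem4} fail, and there is no $[g_2(k,d),k,d]$ code of Solomon--Stiffler or Belov type at all. For $k=7$ you can see this directly: subspaces of dimensions $4,5,6$ in $\mathbb{F}_2^7$ always share a nonzero vector, which would have to be punctured three times from only two copies of $S_7$.

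An LCP $(\mathcal{C},\mathcal{D})$ of length $g_2(k,d)$ with security parameter $d$ forces $\mathcal{C}$ to be a $[g_2(k,d),k,d]$ code. Hence the corollary as stated fails for these $(k,d)$, and no combination of the quoted results can complete your case analysis. The statement needs an extra hypothesis, e.g.\ that the parameters satisfy one of the conditions of Lemma~\ref{lem4}. The paper's own justification, which just combines Theorems~\ref{sun-newthm1}, \ref{sun-lem14} and \ref{lem9}, has the same hole. It also silently omits $u_1\in\{1,2\}$ and $k\le 4$, which your appeal to \cite{Choi23} and Theorem~\ref{thm6} does address.

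\textbf{There is also a second, repairable gap.} Suppose the sum condition fails but the consecutive alternative of Lemma~\ref{lem4} holds, e.g.\ $k=7$, $d\in\{65,66\}$. Then a Belov-type Griesmer code exists, but nothing you cite produces an LCP from it:
\begin{itemize}
\item Theorem~\ref{thm6} and Lemma~\ref{lem6} concern SS codes under the sum condition;
\item Theorem~\ref{sun-newthm1} needs $u_1\ge 3$;
\item \cite[Theorem 15]{Choi23} needs $n\equiv 0,1\pmod{2^k-1}$, which fails for $n=134,135$.
\end{itemize}
You would have to verify $\mathbf{1}\notin\mathrm{Hull}(\mathcal{C})$ directly and use the general Li--Shi--Ling criterion recalled just before Lemma~\ref{lem6}. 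This is straightforward. Every column of $S_k(h)\setminus(F\cup F_1)$ occurs at most $h$ times, so every codeword has weight at most $h2^{k-1}$. On the other hand, $g_2(k,d)\ge (h-1)(2^k-1)+k>h2^{k-1}$ for $h\ge 2$ and $d>(h-1)2^{k-1}$. Hence $\mathbf{1}\notin\mathcal{C}$.
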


{
\begin{remark}
  {\rm
(1)  For $h\le   \lceil \frac{k-1}{2} \rceil$, the LCP $(\C, \D)$ obtained in Theorem 13 are all new because Güneri only consider the case of $h\ge \lceil \frac{k-1}{2} \rceil$. 

(2) Theorem~9 constructs optimal binary LCPs from Solomon--Stiffler (SS) codes, and its applicability is restricted by the condition
$
d\ge \left\lceil \frac{k-1}{2}\right\rceil 2^{k-1}.
$
In contrast, Theorem~15 constructs optimal binary LCPs from Belov codes by applying an additional puncturing operation to SS codes, namely $
d\ge 2^{k-1}.
$
Thus Theorem~15 relaxes the lower bound on \(d\) by a factor of
$
\left\lceil \frac{k-1}{2}\right\rceil.
$
Consequently, for a fixed \(k\), Theorem~15 covers all distances
\[
2^{k-1}\le d<\left\lceil \frac{k-1}{2}\right\rceil 2^{k-1},
\]
which are not covered by Theorem~9. Theorem~15 strictly extends the range of constructible optimal binary LCPs and provides more flexible parameters than Theorem~9.
  }
 \end{remark}

}

Following the proof of Theorem 15, we obtain {\bf Algorithm 1}. In Algorithm 1, the key step is to search a suitable $\sigma $ and a permutation matrix $P$ such that $I_k+MN$ is invertible.

\begin{algorithm}
\caption{Construction of Optimal Binary LCPs }
\label{alg:belov-LCPs}
\vspace{0.3\baselineskip}
\textbf{Input:} 
Target minimum distance $d = h2^{k - 1} - \sum_{i = 1}^{t} 2^{u_i - 1}(d \neq 2^{k-2})$, with $1\le u_1<\cdots<u_t<k$ and $\sum_{i=1}^{{min}\{h+1,t\}}u_{t+1-i}\le hk $


\textbf{Our customized selection for Belov puncturing:}
\begin{itemize}
\item Choose subspace $W\subseteq\mathbb{F}_2^k$ with $\dim(W)=\theta=3$ using our dedicated basis selection method.
\item Select subset $A\subset W$ such that:
    \begin{itemize}
    \item $\theta-1<|A|\le \theta+1$;
    \item every subset of at most $\theta$ vectors in $A$ is linearly independent;
    \item set $\varepsilon=\theta+1-|A|\in\{0,1\}$.
    \end{itemize}
\item Construct Belov extra puncturing set $F_1 = W\setminus A$, enforce disjoint condition $F\cap F_1=\emptyset$, where $F=\bigcup_{i=1}^t U_i$ is the column set to be deleted for Solomon-Stiffler codes.
\end{itemize}

\textbf{Output:} Optimal binary $[n,k,d]$ LCPs  $(\C,\D)$ with $n = g_2(k,d)$ and $G=[S_k(h) \setminus (F \cup F_1)]$

\textbf{ Main Steps:}
\vspace{0.3\baselineskip} 
\begin{enumerate}
\setlength{\itemsep}{2pt}
\setlength{\parskip}{2pt}
\setlength{\parsep}{2pt}
    \item \textbf{Compute  $h = \left\lceil \frac{d}{2^{k-1}} \right\rceil$, and select puncturing location for $F_1$:}
    
    \begin{enumerate}
    \setlength{\itemsep}{2pt}
        \item $h=1$: Refer to (i) in Lemma 12 and Theorem 13 for concrete choice of basis of $W$, set $F_1=W \setminus A$ with $F \cap F_1 = \emptyset$.
        \item $2 \le h \le \left\lceil \frac{k-1}{2} \right\rceil$: Refer to (ii) in Lemma 12 and Theorem 13; place deletion $F_1$ inside subspace $U_{k-4}$ of dimension $k-2$.
        \item $h \ge \left\lceil \frac{k-1}{2} \right\rceil$: Refer to Güneri’s idea and Theorem 9. 
    \end{enumerate}

    \item \textbf{$\sigma(M)$ from swap \(m_i\) and \(m_j\):}
    \begin{enumerate}
    \setlength{\itemsep}{2pt}
        \item Transform $G$ into systematic form: $G = (I_k : M)$,  compute $MM^T$. 
        \item \(\sigma(M)\) from swap \(m_i\) and \(m_j\) in \(M\)
        \item $M\sigma(M)^T = MM^T + (m_i+m_j)(m_i+m_j)^T$ 
    \end{enumerate}
    
    \item \textbf{$N^T$ from circulary shift row of $\sigma(M)$ }
    \begin{enumerate}
    \setlength{\itemsep}{2pt}
        \item circulary shift row of $\sigma(M)$ to obtain the matrix $N^T=P\sigma(M)$, where $P$ is a permutation matrix
        \item $MN = M(N^T)^T = M(P\sigma(M))^T = M\sigma(M)^TP^T$, $MN$ is a lower triangular matrix
        \item $I_k + MN$ is invertible
    \end{enumerate}
\end{enumerate}
\vspace{0.3\baselineskip}
\end{algorithm}

\begin{theorem}\label{thm17}
The time complexity of Algorithm 1 is $O(k^2 n)$, and its space complexity is $O(kn)$.   
\end{theorem}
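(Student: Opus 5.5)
I will bound the cost of Algorithm 1 step by step and add up the costs, writing $n=g_2(k,d)$ for the code length. Storage first: the generator matrix $G=[S_k(h)\setminus(F\cup F_1)]$ is a $k\times n$ binary matrix, so it occupies $O(kn)$ space. Every other object the algorithm keeps is no larger. $M$ is $k\times(n-k)$, and $\sigma(M)$ and $N^T=P\sigma(M)$ have the same size. The products $MM^T$, $M\sigma(M)^T$ and $MN$ are $k\times k$, and the subspace data $W$, $A$, $F_1$ consist of at most $2^\theta-1=7$ columns of length $k$. Since $k\le n$, the total space is $O(kn)$. One point needs care: the column sets $U_i$ of $F$ must not be stored explicitly. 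I will store only their bases (at most $hk$ basis vectors in total) and decide membership of a column in some $U_i$ by a rank test, as sketched below.

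Time, in the order of the main steps.

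\emph{Step 1.} Computing $h=\lceil d/2^{k-1}\rceil$ and placing $F_1$ costs $O(k)$, because $W$ has the fixed basis prescribed in Lemma \ref{lem12} and $|F_1|\le 4$. Forming $G$ means listing the columns of $S_k(h)$ and deleting those in $F\cup F_1$. I will charge this as $O(k)$ work per surviving column, so $O(kn)$ overall. Membership in the small fixed set $F_1$ is an $O(k)$ comparison. For membership in $F$, I will keep the bases of the $U_i$ in reduced echelon form; the paper's choice of $U_i$ spanned by subsets of $\mathcal B$ makes this easy. Then testing one column costs $O(k^2)$, which is still within $O(k^2n)$.

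\emph{Step 2.} Gaussian elimination brings $G$ to systematic form $[I_k:M]$. Each of the $k$ pivots needs at most $k$ row operations on rows of length $n$, which gives $O(k^2n)$; the preceding column permutation is $O(n)$. Computing $MM^T$ costs $O(k^2n)$: $k^2$ inner products of length $n-k$. Swapping two columns $m_i,m_j$ costs $O(k)$. Updating $M\sigma(M)^T=MM^T+(m_i+m_j)(m_i+m_j)^T$ costs $O(k^2)$.

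\emph{Step 3.} The cyclic row shift $N^T=P\sigma(M)$ is a relabelling of rows, costing $O(kn)$ or $O(k)$ with pointers. Then $MN=M\sigma(M)^TP^T$ is a column permutation of a $k\times k$ matrix, $O(k^2)$. Checking that $I_k+MN$ is invertible costs at most $O(k^3)\le O(k^2n)$ by elimination, or $O(k^2)$ if we only verify the triangular shape. Summing all steps gives $O(k^2n)$.

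The main obstacle is Step 1 together with the search in Step 2. First, I must argue that building $G$ does not cost $O(hk2^k)$ in a way that exceeds the bound. This works because $n\ge h(2^k-1)/2$ for Griesmer codes in the stated range (at most half of each simplex copy is deleted), so $h2^k=O(n)$. Second, the pair $x_i,x_j$ to swap must be located in $M$ without a quadratic search over all column pairs. The pair is fixed explicitly by Lemma \ref{lem12} ($x_1,\dots,x_8$), so one scan of the columns finds it in $O(kn)$.
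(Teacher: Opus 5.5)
Your proposal is correct and follows the same step-by-step accounting as the paper: Gaussian elimination in $O(k^2n)$, forming $MN$ within $O(k^2n)$, an $O(k^3)$ invertibility test, $O(kn)$ re-indexing for $N^T$, and storage dominated by the $k\times n$ matrices; it is also more complete, since the paper's proof takes $G$ as given, whereas you also charge the formation of $G$ from $S_k(h)$ via $h2^k=O(n)$. One small correction there: the inequality $n\ge h(2^k-1)/2$ can fail narrowly (for $h=1$ and $\{u_1,u_2\}=\{1,k-1\}$ one gets $n=2^{k-1}-1$), but $h2^k=O(n)$ still holds, because distinct $u_i<k$ give $|F|<2^k$, the condition $u_t+u_{t-1}\le k$ gives $|F|\le 2^{k-1}$ when $h=1$, and $|F_1|\le 4$.
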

\begin{proof}
For Algorithm 1, let the output length be $n = g_2(k,d)$. 
Let the final generator matrix be $G\in \mathbb{F}_2^{k\times n}.$
Transform $G$ into {a systematic form of 
$G=[I_k : M]$, $M\in \mathbb{F}_2^{k\times (n-k)}.$}
The matrix $N^T$ is obtained from $M$ by a fixed column swap and cyclic row shifts,
hence $N\in \mathbb{F}_2^{(n-k)\times k}.$
The algorithm finally checks whether $I_k+MN$ is invertible over $\mathbb{F}_2$ or not.

    \begin{enumerate}[label=\roman*.]
        \item[(i).] Time complexity:  
        The dominant operations are standard $\mathbb{F}_2$ linear algebra routines.
        \begin{enumerate}[label=(\arabic*)]
            \item Transforming $G$ (of size $k\times n$) into a systematic form via Gaussian elimination costs $O(k^2 n)$. 
            \item  Computing $MN$, where $M\in \mathbb{F}_2^{k\times (n-k)}$ and $N\in \mathbb{F}_2^{(n-k)\times k}$, using  matrix multiplication costs $O\bigl(k^2(n-k)\bigr)=O(k^2 n).$
            \item Invertibility testing of the $k\times k$ matrix  $I_k+MN$ by elimination costs $O(k^3)$. 
        \end{enumerate}
        The construction of $N^T$ from $M$ (column swap and cyclic shifts) is a re-indexing/copy
        operation of cost at most $O(k(n-k))=O(kn)$, which is negligible compared to the above terms.
        Therefore, the overall time complexity of Algorithm 1 is $O(k^2 n)$, or more explicitly $O(k^2 n)+O(k^3)$, with the leading term $O(k^2 n)$ when $n$ is large.
        
        \item[(ii).] Space complexity:
        The algorithm stores $G\in \Bbb F_2^{k\times n}$ and matrices $M\in \mathbb{F}_2^{k\times(n-k)}$, $N\in \mathbb{F}_2^{(n-k)\times k}$. Hence the memory usage is dominated by storing $O(kn)$ field elements. The auxiliary matrices $MN$ and $I_k+MN$ are of size $k\times k$ and require only $O(k^2)$ additional space. Consequently, the overall space complexity is $O(kn)$.
    \end{enumerate}

    This completes the proof.
\end{proof}
}


As shown in Table \ref{table:code-comparison}, Belov codes enable the construction of linear complementary pairs with smaller minimum distances $d$ compared to SS codes. They extend the parameter range and expand the family of optimal binary LCPs.

{ Table~\ref{tab:belov_lcps} lists many optimal binary LCPs obtained by our method. To extend the construction to smaller values of \(d\), we employ Belov's improvement as our core contribution: an additional puncturing operation on SS codes, namely puncturing the set \(F_1\). This enables the systematic construction of LCPs for smaller values of \(d\), thereby extending the range of achievable parameters.

As shown in Table~\ref{table:code-comparison} and Table~\ref{tab:belov_lcps}, our Belov-based approach reduces the minimum-distance threshold from 
$
\left\lceil \frac{k-1}{2}\right\rceil 2^{k-1}
$
to \(2^{k-1}\), covering a much wider range of parameters. More explicitly, for \(k=6\), Güneri's construction requires the distance threshold \(d\ge 96\) and code length \(n\ge 189\) to construct a binary optimal LCP, whereas our method requires only \(d\ge 32\) and \(n\ge 63\). This demonstrates that our method substantially improves the existing results.
}




\section{Summary and concluding remarks} 

The main contributions of this paper are the following:

\begin{enumerate}
    
    \item We demonstrated that the Belov construction, which additionally deletes columns $F_1$ beyond those removed in SS codes, yields Griesmer codes with more flexible parameters.  By carefully selecting subspaces $U_i$, the set $F_1$, and by constructing appropriate column swaps and cyclic row shifts, we proved that for every $k \geq 5$ and $d \geq 2^{k-1}$, there exists an optimal binary $[g_2(k,d), k]$ LCP  whose security parameter attains the Griesmer bound. This significantly relaxes G\"uneri's condition $d \geq \lceil (k-1)/2 \rceil 2^{k-1}$. See Theorems \ref{lem9} and \ref{sun-newthm1}.
    
    
    \item  We provided Algorithm 1, a concrete constructive procedure for generating such optimal LCPs, along with a complexity analysis showing $O(k^2 n)$ time and $O(kn)$ space. See Theorem \ref{thm17}.
    
    \item  Our construction produces many new optimal binary LCPs, as documented in Table \ref{tab:belov_lcps}, substantially expanding the known families of such codes.
\end{enumerate}


In conclusion, this work substantially expands the known families of optimal binary LCPs by integrating Belov codes into the construction framework, providing more flexible parameters and smaller minimum distances than previously possible. As a future work, it is very interesting to present more binary optimal LCPs.

\begin{table}
\begin{center} 
\caption{Comparison of Solomon-Stiffler and Belov Code Parameters for $k \geq 5$}
\label{table:code-comparison}
\begin{tabu} to 0.8\textwidth{|X[1,c]|X[2.2,c]|X[1.6,c]|} 
\hline
& Solomon-Stiffler Code \cite{G} & Belov Code \\
\hline
&$k \geq 5$,  $d \geq \lceil (k-1)/2 \rceil 2^{k-1}, h \geq 2$ & $k \geq 5, d \geq 2^{k-1}, h \geq 2$  \\
\hline
$k = 5$ & $d \geq 32, n \geq 62$ & $d \geq 16, n \geq 31$ \\
\hline
$k = 6$ & $d \geq 96, n \geq 189$ & $d \geq 32, n \geq 63$ \\
\hline
$k = 7$&  $d \geq 192, n \geq 381$ & $d \geq 64, n \geq 127$ \\
\hline
$k = 8$ & $d \geq 512, n \geq 1020$ & $d \geq 128, n \geq 255$ \\
\hline
$k = 9$ & $d \geq 1024, n \geq 2044$ & $d \geq 256, n \geq 511$ \\
\hline
$k = 10$  &$d \geq 2560, n \geq 5115$ & $d \geq 512, n \geq 1023$ \\
\hline
\end{tabu}
\end{center} 
\end{table}

\begin{table}
\centering
\caption{Binary Optimal LCPs from Belov codes  from Theorem \ref{sun-newthm1}} 
\label{tab:belov_lcps}
\begin{tabu} to 1\textwidth{|X[0.4,c]|X[2.5,c]|X[1.6,c]|} 
\hline
\textbf{$h$} & \textbf{Optimal LCPs from Belov codes} & \textbf{Minimal Security Parameter $d_{LCP}$} \\
\hline
\multirow{6}{*}{$h=1$} 
& $[20,5,9]$, $[21,5,10], \ldots ,[28,5,14]$ & $d_{\min}=9$ \\
\cline{2-3}
& $[44,6,21]$, $[45,6,22] , \ldots , [60, 6, 30]$ & $d_{\min}=21$ \\
\cline{2-3}
& $[92,7,45]$, $[93,7,46], \ldots , [124,7,62]$ & $d_{\min}=45$ \\
\cline{2-3}
& $[188,8,93]$, $[189,8,94], \ldots , [252,8,126]$ & $d_{\min}=93$ \\
\cline{2-3}
& $[380,9,189]$, $[381,9,190], \ldots , [501,9,250]$ & $d_{\min}=189$ \\
\cline{2-3}
& $[764,10,381]$, $[765,10,382], \ldots, [1013,10,506]$ & $d_{\min}=381$ \\
\hline
\multirow{6}{*}{$h=2$} 
& $[36,5,17]$, $[37,5,18], \ldots, [61,5,31]$ & when $h > 2 $, $d > 32$, stop searching\\
\cline{2-3}
& $[76,6,37]$, $[77,6,38], \ldots, [116,6,58]$ & \\
\cline{2-3}
& $[204,7,101]$, $[205,7,102], \ldots, [244,7,122]$ &  \\
\cline{2-3}
& $[412,8,205]$, $[413,8,206], \ldots, [500,8,250]$ &  \\
\cline{2-3}
& $[828,9,413]$, $[829,9,414], \ldots, [1012,9,506]$ &  \\
\cline{2-3}
& $[1660,10,829]$, $[1661,10,830], \ldots, [2036,10,1018]$ &  \\
\hline
\multirow{5}{*}{$h=3$} 
& $[163,6,81]$, $[164,6,82], \ldots, [179,6,90]$ & when $h > 3$, $d > 96$, stop searching\\
\cline{2-3}
& $[324,7,161]$, $[325,7,162], \ldots, [371,7,186]$ & when $h > 3$, $d > 192$, stop searching \\
\cline{2-3}
& $[652,8,325]$, $[653,8,326], \ldots, [755,8,378]$ &  \\
\cline{2-3}
& $[1308,9,653]$, $[1309,9,654], \ldots, [1523,9,762]$ &  \\
\cline{2-3}
& $[2620,10,1309]$, $[2621,10,1310], \ldots, [3059,10,1530]$ & \\ \cline{2-3}
\hline
\multirow{3}{*}{$h=4$}
& $[900,8,449]$, $[901,8,450], \ldots, [1010,8,506]$ & when $h > 4$, $d > 512$, stop searching \\
\cline{2-3}
& $[1804,9,901]$, $[1805,9,902], \ldots, [2034,9,1018]$ & when $h > 4$, $d > 1024$, stop searching \\
\cline{2-3}
& $[3612,10,1805]$, $[3613,10,1806], \ldots, [4082,10,2042]$ & \\
\cline{2-3}
\hline
\multirow{1}{*}{$h=5$}
& $[4620,10,2309]$, $[4621,10,2310], \ldots, [5105,10,2554]$ & when $h > 5$, $d > 2560$, stop searching \\
\cline{2-3}
\hline
\end{tabu}
\end{table}

\end{document}